\newtheorem{theorem}{Theorem}
\newtheorem{lemma}[theorem]{Lemma}
\newtheorem{corollary}[theorem]{Corollary}
\theoremstyle{definition}
\newtheorem{definition}[theorem]{Definition}
\title{Fast algorithms for general spin systems on bipartite expanders
\footnote{A preliminary short version of the manuscript (without proofs) will appear in the proceedings of MFCS 2020.}}
\author{Andreas Galanis
\thanks{Authors' address: Department of Computer Science, University of Oxford, Wolfson Building, Parks Road, Oxford, OX1~3QD, UK.}
\and
Leslie Ann Goldberg$^*$
\and
James Stewart$^*$}
\date{14 April 2020}
\def\calG{\mathcal{G}}
\def\calC{\mathcal{C}}
\def\calP{\mathcal{P}}
\def\calK{\mathcal{K}}
\def\Gammab{\ensuremath{\boldsymbol{\Gamma}}}
\def\sigmab{\ensuremath{\boldsymbol{\sigma}}}
\def\overlap{\ensuremath{\mathsf{overlap}}}
\def\Bb{\ensuremath{\mathbf{B}}}
\def\wBb{\ensuremath{\widehat{\Bb}}}
\def\epsilon{\varepsilon}
\def\spin{\text{\normalsize{S}\scriptsize{PIN}}}
\def\hmax{h_{\text{max}}}
\newcommand{\floor}[1]{\left\lfloor #1 \right\rfloor}
\newcommand{\gbip}{\calG^\text{bip}_{ \Delta, \lambda}}
\newcommand{\kmax}{\calK_H^\text{max}}
\newcommand{\prob}[4]{
\hspace*{0.5ex}
\vbox{
\begin{description}
  \item[\bf Parameters:] #1
  \vspace{-1.75ex}
  \item[\bf Name:] #2
  \vspace{-1.75ex}
  \item[\bf Input:] #3
  \vspace{-1.75ex}
  \item[\bf Output:] #4
\end{description}
}
}
\begin{document}
\maketitle{}

\begin{abstract}
A spin system is a framework in which the vertices of a graph are assigned spins from a finite set. 
The interactions between neighbouring spins give rise to weights, so a spin assignment can also be viewed as a weighted graph homomorphism.
The problem of approximating the partition function (the aggregate weight of spin assignments)
or of sampling from the resulting probability distribution is typically intractable for general graphs.

In this work, we consider arbitrary spin systems on  bipartite expander $\Delta$-regular graphs, including the canonical class of bipartite random $\Delta$-regular graphs. We develop fast approximate sampling and counting algorithms for general spin systems whenever the degree and the spectral gap of the graph are sufficiently large. Roughly, this guarantees that the spin system is in the so-called low-temperature regime. Our approach generalises the techniques of Jenssen et al. and Chen et al.  by showing that typical configurations on bipartite expanders correspond to ``bicliques'' of the spin system; then, using suitable polymer models, we show how to sample such configurations and approximate the partition function in $\tilde{O}(n^2)$ time, where $n$ is the size of the graph. 
\end{abstract}
 
\section{Introduction}

Spin systems are general frameworks that encompass  sampling and counting problems in computer science, graph homomorphism problems in combinatorics, and phase transition phenomena in statistical physics.  In this paper, we provide algorithms for general spin systems on bounded-degree bipartite expander graphs. 

A $q$-spin system is specified by a set of spins $[q]=\{1,2,\hdots,q\}$ and a symmetric interaction matrix $H \in \mathbb{R}_{\geq 0}^{q \times q}$. Given a graph $G = (V_G, E_G)$, a spin configuration is an assignment $\sigma : V_G \rightarrow [q]$ and  the weight of $\sigma$ is given by $w_{G, H}(\sigma) = \prod_{\{u, v\} \in E_G} H_{\sigma(u), \sigma(v)}$. The Gibbs distribution of the system, denoted by $\mu_{G,H}$, is a probability distribution on the set $\Sigma_{G, H}$ which denotes the set of all possible spin configurations, given by 
\[\mu_{G,H}(\sigma)=\frac{ w_{G, H}(\sigma)}{Z_{G,H}},\]
where $Z_{G, H} := \sum_{\sigma \in \Sigma_{G, H}} w_{G, H}(\sigma)$ is the so-called partition function.  Well-known examples of spin systems are the Ising/Potts models, where the matrix $H$ has all diagonal entries equal to some parameter $\beta>0$ and off-diagonal entries equal to 1; the case $q=2$ is the Ising model, and $q>2$ is the Potts model. Apart from statistical physics systems, graph homomorphisms also fit naturally into this framework, whenever $H$ has 0-1 entries.

Henceforth, we will normalise $H$ so that its largest entry is equal to $1$. More formally, we will consider $\delta$-matrices, defined as follows.
\begin{definition}
\label{def:deltamatrix}
Let $q \geq 2$ and let $\delta \in (0, 1)$. A symmetric matrix $H \in \mathbb{R}_{\geq 0}^{q \times q}$ is called a $\delta$\textit{-matrix} if $\max_{i, j \in [q]} H_{i, j} = 1$ and $\max_{i, j \in [q] : H_{i, j} \neq 1} H_{i, j} \leq \delta$.\qed
\end{definition}

Note that a every symmetric $0$-$1$ matrix is a $\delta$-matrix
for every $\delta\in(0,1)$. Also, apart from trivial cases\footnote{If $H$ has all of its entries equal to some $c>0$, then $Z_{G, H}=q^{|V_G|}c^{|E_G|}$ for any graph $G=(V_G,E_G)$. Similarly, if $H$ is the all zeros matrix, then $Z_{G, H}=0$ for any non-empty graph $G$.  So suppose that $H$ has at least two entries with distinct values and let $\hmax = \max_{i, j \in [q]} H_{i, j}$. Then, the matrix $H'=\frac{1}{\hmax}H$ is a $\delta$-matrix, for any $\delta\in (0,1)$ which is bigger than the second largest entry in $H'$. Moreover, for any graph $G=(V_G,E_G)$ we have that $Z_{G, H} = \hmax^{|E_G|} \cdot Z_{G, H'}$.}, we can always normalise the interaction matrix of any spin system to satisfy Definition~\ref{def:deltamatrix} for some $\delta\in (0,1)$.

Approximately sampling from the Gibbs distribution of a spin system and approximating its partition function are well-studied problems in computer science, since they appear in various applications. However, even for the most canonical models, such as the Potts model or graph homomorphisms, these computational problems are hard in general, even on bounded-degree graphs \cite{Greenhill, Bulatov, govorov2020dichotomy, sampling, galanis2016approximately, Potts, ferro}. 

 
Despite these hardness results for specific models, there is currently no known characterisation classifying the complexity of approximating $Z_{G,H}$
(determining for which~$H$ approximation is tractable).
However, there are some hardness results which apply to broad classes of~$H$. 
For example, 
it has been shown    \cite{galanis2016approximately}  
that if $H$ is the adjacency matrix of any (non-trivial) bipartite graph
(i.e., a bipartite graph whose connected components are not all complete)
then approximating  $Z_{G,H}$ is \#BIS-hard\footnote{\#BIS is the complexity
class containing all approximate counting problems
that are equivalent to approximately counting the independent sets of a bipartite graph. It is an important class in the field of approximate counting, but
we will not need more details in this paper.
This hardness result does not represent a complexity classification,
even in the $0$-$1$ case. In fact, 
there are known examples \cite[Theorem 5.1]{trees} 
of $0$-$1$ matrices~$H$ such that $Z_{G,H}$ is  NP-hard, even when $G$ is restricted to be bipartite. } , even when the input~$G$
is restricted to be bipartite.


In light of these hardness results, it is natural to consider whether efficient algorithms can be developed for more restricted classes of graphs.  Recently, Jenssen, Keevash, and Perkins \cite{JKP} (see also \cite{helmuth2019algorithmic}) showed a new framework that such algorithms are possible for three canonical models (ferromagnetic Potts model, colourings, and independent sets) on bounded-degree expander graphs, in a range of parameters where the problems are otherwise hard for  general bounded-degree graphs.  
See also independent applications to colourings and independent sets in~\cite{liao2019counting}.

In this paper, we show that this framework can further be used to obtain approximation algorithms of \emph{any} spin system on \emph{bipartite} expander graphs whenever the degree is sufficiently large.

More precisely, we will consider regular bipartite graphs whose second eigenvalue is bounded by a small constant \cite{hoory2006expander}. Let $G $ be an $n$-vertex $\Delta$-regular bipartite graph. Let $\lambda_1(G) \geq \lambda_2(G) \geq \dots \geq \lambda_n(G)$ denote the eigenvalues of the adjacency matrix of $G$. It is well-known that $\lambda_1(G) = \Delta$ and $\lambda_n(G) = -\lambda_1(G)$. We define $\lambda(G) = \lambda_2(G)$.

\begin{definition}
Let $\Delta\geq 3$ be an integer and $\lambda$ be a positive real strictly less than $\Delta$. We let $\gbip$ denote the set of all connected $\Delta$-regular bipartite graphs $G$, for which $\lambda(G) \leq \lambda$. \qed
\end{definition}

One of the primary examples of bipartite expander graphs, and one of the main motivations behind this work, are random bipartite regular graphs. It is known \cite{brito2018spectral} that, for any fixed $\epsilon>0$, with high probability\footnote{Here and throughout the paper, ``with high probability'' means with probability tending to 1 as the size of the graph tends to infinity.} over the choice of a random bipartite $\Delta$-regular graph $G$, it holds that  $\lambda(G)\leq 2\sqrt{\Delta-1}+\epsilon$. From a counting/sampling perspective these graphs are particularly interesting since they have been key ingredients in obtaining inapproximability results \cite{sly2014,cai,antiferro}. Somewhat surprisingly, while we know constant factor\footnote{The constant factor estimates follow from applying the small subgraph conditioning method, which gives more accurate information about the partition function using the counts of small cycles, see \cite[Lemma 6.12]{antiferro}.} estimates of the partition function via (non-algorithmic) probabilistic methods that hold with probability $1-o(1)$ over the choice of the graph \cite{antiferro}, it is not known how to approximately sample from the Gibbs distribution efficiently. In fact,  even obtaining more refined estimates on the partition function is an open problem. As a corollary of our main result, we address both of these questions, provided that the degree $\Delta$ is sufficiently large relative to $H$.

To formally state our results, we will need some definitions. First,  the following computational problem that we will study. 

\prob{$H$, a symmetric matrix in  $\mathbb{R}^{q \times q}_{\geq 0}$, integer $\Delta\geq 3$, and a real $\lambda\in(0,\Delta)$.}
{$\spin_{H, \Delta, \lambda}$.}
	 {A graph $G \in \gbip$.}
 	 {The value of $Z_{G, H}$.}
In particular, we consider the problem of approximating $Z_{G, H}$ and sampling from $\mu_{G,H}$. Given an accuracy parameter $\epsilon > 0$, we say that $\hat{Z}$ is an $\epsilon$-approximation to $Z$ if $(1-\epsilon) Z \leq \hat{Z} \leq (1+\epsilon)Z$. For a distribution $\mu$, we say that a random variable $X$ is an $\epsilon$-sample from $\mu$ if the total variation distance between the distribution of $X$ and $\mu$ is at most $\epsilon$. A \textit{fully polynomial randomised approximation scheme} (FPRAS) for $\spin_{H, \Delta, \lambda}$ is a randomised algorithm that, given $\epsilon > 0$ and $G=(V,E) \in \gbip$ as input, outputs a random variable that is an $\epsilon$-approximation to $Z_{G, H}$ with probability at least $3/4$, in time $poly(|V|,1/\epsilon)$.\footnote{Note, the error probability can be reduced to any $\eta>0$ by calling the original $\mathsf{FPRAS}$ $O(\log(1/\eta))$ times.} 

We prove the following result. For a bipartite graph $G$, we use $(V_G^0, V_G^1)$ to denote the bipartition of the vertex set of $G$, and all logarithms throughout the paper are with base $e$.

\begin{restatable}{theorem}{ssfpras}\label{thm:ssfpras}
Let $q \geq 2$ be an integer,  $\delta$ be a real in $(0, 1)$, $H \in \mathbb{R}^{q \times q}_{\geq 0}$ be a symmetric $\delta$-matrix. Suppose that  $\Delta,\lambda$ satisfy $\tfrac{\Delta}{\lambda}\geq \tfrac{100}{1-\delta}q^2\log(q\Delta)$ and $\Delta \geq \big(\tfrac{10}{1-\delta}q\log(q\Delta)\big)^4$. Then,  there is an $\mathsf{FPRAS}$ for $\spin_{H, \Delta, \lambda}$. 

In fact, there is a randomised algorithm that, given a graph $G \in \gbip$ with $n=|V_G^0|=|V_G^1|$ vertices and an accuracy parameter $\epsilon^* \geq e^{-n/(5q)}$, outputs an $\epsilon^*$-approximation to $Z_{G,H}$ and an $\epsilon^*$-sample from the Gibbs distribution $\mu_{G,H}$ in time $O((n/\epsilon^*)^2 \log^4(n/\epsilon^*))$.
\end{restatable}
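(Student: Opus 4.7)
My strategy is to adapt the abstract polymer model / cluster expansion framework of Jenssen, Keevash and Perkins to arbitrary $\delta$-matrices by organising configurations around the \emph{maximal bicliques} of $H$. Call an ordered pair $(I,J)$ with $I,J\subseteq[q]$ a biclique of $H$ if $H_{i,j}=1$ for all $i\in I, j\in J$, and let $\kmax$ denote the collection of maximal such bicliques; there are at most $2^{2q}$ of them. For each $(I,J)\in\kmax$, the ``ground configurations'' with $\sigma(V_G^0)\subseteq I$ and $\sigma(V_G^1)\subseteq J$ all contribute weight $1$ and together account for $|I|^{n}|J|^{n}$. The first step is therefore to show that up to an exponentially small (in $n$) relative error, $Z_{G,H}$ is captured by configurations close to some $(I,J)\in\kmax$, and that these can be disentangled into $|\kmax|$ polymer partition functions via a careful assignment rule (assigning each configuration to the unique maximal biclique maximising, say, the number of ``typical'' vertices, breaking ties lexicographically).

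The second step is to define, for each fixed $(I,J)\in\kmax$, a polymer model whose polymers $\Gammab$ are connected (in an adjacency where two vertices are adjacent if they are at $G$-distance $\leq 2$) subsets of the ``defect'' vertices equipped with a spin assignment from $[q]\setminus I$ (on $V_G^0$) or $[q]\setminus J$ (on $V_G^1$). The polymer weight $w(\Gammab)$ is the ratio of the weight of the corresponding partial configuration to the ground weight $|I|^{|V_G^0\cap V(\Gammab)|}|J|^{|V_G^1\cap V(\Gammab)|}$, so that $\Xi_{I,J}:=|I|^{n}|J|^{n}\sum_{\{\Gammab_1,\ldots,\Gammab_k\}\text{ compatible}}\prod_\ell w(\Gammab_\ell)$ recovers the contribution of all configurations assigned to $(I,J)$. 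The maximality of $(I,J)$ ensures that every polymer edge has at least one endpoint-spin pair with $H_{\cdot,\cdot}\leq\delta$, so $w(\Gammab)\leq\delta^{e(\Gammab)}\cdot q^{|V(\Gammab)|}$ where $e(\Gammab)$ is the number of $G$-edges incident to $V(\Gammab)$ with at least one ``wrong'' endpoint spin.

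The third step is the key analytic input: use the Expander Mixing Lemma on $G\in\gbip$ to show that $e(\Gammab)\geq c(\Delta-\lambda)|V(\Gammab)|$ for an appropriate constant $c$ (using the fact that defect vertices form a non-trivial cut with respect to the ground state, together with connectedness of the polymer which prevents the cut from being too small). Combined with the trivial bound that the number of polymers of size $k$ rooted at a given vertex is at most $(e\Delta^2)^k\cdot q^k$ (using a standard tree enumeration argument), the Kotecký--Preiss condition $\sum_{\Gammab'\not\sim\Gammab}w(\Gammab')e^{|\Gammab'|}\leq|\Gammab|$ holds precisely under the quantitative hypotheses $\tfrac{\Delta}{\lambda}\geq\tfrac{100}{1-\delta}q^2\log(q\Delta)$ and $\Delta\geq\big(\tfrac{10}{1-\delta}q\log(q\Delta)\big)^4$ of the theorem. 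Convergence of the cluster expansion for each $\log\Xi_{I,J}$ follows, and truncating it at order $K=O(\log(n/\epsilon^*))$ yields an $\epsilon^*$-approximation.

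The final step is algorithmic: enumerate all clusters of size $\leq K$ in $\tilde O(n)$ per cluster via the Helmuth--Perkins--Regts / Jenssen--Keevash--Perkins machinery, sum the truncated expansion over the $O(1)$ bicliques in $\kmax$ to approximate $Z_{G,H}$, and sample by first sampling a biclique with probability proportional to $\Xi_{I,J}$, then sampling the polymer configuration using self-reducibility and the approximation algorithm for $\Xi_{I,J}$, and finally filling in the ground-state spins uniformly. The overall running time is $\tilde O((n/\epsilon^*)^2)$. I expect the main obstacle to be the second step: defining a clean, non-overlapping assignment of configurations to maximal bicliques in a way that still admits a polymer expansion, because different maximal bicliques of $H$ can share rows or columns, so the naive decomposition would overcount. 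Overcoming this will likely require an inclusion-exclusion over $\kmax$ or a deterministic tie-breaking rule combined with showing that the ``boundary'' between assignments is itself a low-weight polymer event.
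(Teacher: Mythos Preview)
Your overall architecture---organise around maximal bicliques, define one polymer model per biclique, bound overlap, then run a polymer-model algorithm---matches the paper closely, and your anticipation of the overlap issue is on target (the paper handles it exactly by bounding $Z^{\overlap}_{G,H,3\epsilon}$ and using a tie-breaking assignment). However, your second and third steps contain a real gap in the mechanism that controls polymer weights.

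The claim ``maximality of $(I,J)$ ensures that every polymer edge has at least one endpoint-spin pair with $H_{\cdot,\cdot}\le\delta$'' is false. If $u\in V_G^0$ carries a defect spin $i\notin I$ and its neighbour $v\in V_G^1$ carries a defect spin $j\notin J$, nothing forces $H_{i,j}<1$: the pair $(i,j)$ may well lie in a different maximal biclique. Likewise, for an edge from a defect vertex with spin $i\notin I$ to a fixed ground-state neighbour with spin $j\in J$, maximality only guarantees that \emph{some} $j'\in J$ has $H_{i,j'}\le\delta$, not that $H_{i,j}\le\delta$ for the particular $j$. So the bound $w(\Gammab)\le\delta^{e(\Gammab)}q^{|V(\Gammab)|}$ does not follow, and the subsequent Expander-Mixing-Lemma step bounding $e(\Gammab)$ by $c(\Delta-\lambda)|V(\Gammab)|$ is aimed at the wrong quantity.

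The paper's fix is to push the $\delta$-saving to the \emph{boundary} rather than the interior. The polymer weight is defined so that each boundary vertex $u\in\partial V_\gamma\cap V_G^i$ contributes a factor $F_u=\sum_{j\in B_i}\prod_{v\in V_\gamma\cap\partial u}H_{j,\sigma_\gamma(v)}$ (summed over ground-state spins), and maximality of $(B_0,B_1)$ gives $F_u\le|B_i|-1+\delta$ because at least one $j\in B_i$ must clash with $\sigma_\gamma(v)\notin B_{i\oplus1}$. This yields $w(\gamma)\le(1-(1-\delta)/q)^{|\partial V_\gamma|}$, and the remaining task is to show $|\partial V_\gamma|$ is large---which the paper does via \emph{vertex} expansion (Tanner's bound), obtaining $|\partial V_\gamma|\ge|V_\gamma|/(4\epsilon)$ for polymers of size $\le 2\epsilon n$. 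This is also why the paper takes the host graph to be $G^3$ rather than $G^2$: compatibility in $G^3$ guarantees the boundaries $V_\gamma^+$ of distinct polymers are disjoint, so the $F_u$ factors multiply cleanly. Finally, note that the paper obtains the stated $\tilde O((n/\epsilon^*)^2)$ running time by invoking the Chen--Galanis--Goldberg--Perkins--Stewart--Vigoda polymer Markov chain (the ``polymer sampling condition''), not truncated cluster expansion; the latter would give an FPTAS but with running time $n^{O(\log\Delta)}$.
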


To prove Theorem~\ref{thm:ssfpras}, we first reduce the problem of approximating $Z_{G,H}$ to the problem of approximately sampling from the Gibbs distribution of a suitable polymer model.  In certain conditions, Chen et al.~\cite{chen2019fast} give an efficient Markov-chain based algorithm for approximately sampling from this Gibbs distribution. In Section~\ref{sec:final}, we show how to use this approximate sampling algorithm to give an FPRAS for $\spin_{H, \Delta, \lambda}$. Under similar conditions, it would also be possible to give an FPTAS for $\spin_{H, \Delta, \lambda}$, by appropriately truncating the polymers and applying the methods of~\cite{JKP}. However, the running time of algorithms obtained using the deterministic approach are slower, and typically of the form $n^{O(\log \Delta)}$. 

We remark that the restriction to bipartite expanders in Theorem~\ref{thm:ssfpras} is necessary to have a result that holds for general spin systems (which is our goal in this paper), see Section~\ref{sec:proofoutline} for a discussion of this point. 
As a corollary of Theorem~\ref{thm:ssfpras}, we have the following for random bipartite $\Delta$-regular graphs.

\begin{corollary}\label{cor:main}
Let $q \geq 2$ be an integer,  $\delta$ be a real in $(0, 1)$, and $H \in \mathbb{R}^{q \times q}_{\geq 0}$ be a symmetric $\delta$-matrix. Then, for all integers $\Delta\geq \big(\tfrac{10}{1-\delta}q\log(q\Delta)\big)^4$, there is a randomised algorithm such that the following holds with high probability over the choice of a random $\Delta$-regular bipartite graph $G$ with $n=|V^0_G|=|V^1_G|$. 

The algorithm, on input $G$ and an accuracy parameter $\epsilon^* \geq e^{-n/(5q)}$, outputs in time  $O((n/\epsilon^*)^2 \log^4(n/\epsilon^*))$ an $\epsilon^*$-approximation to the partition function $Z_{G,H}$ and an $\epsilon^*$-sample from the Gibbs distribution $\mu_{G,H}$. 
\end{corollary}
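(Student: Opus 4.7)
The plan is to derive Corollary~\ref{cor:main} essentially as a direct consequence of Theorem~\ref{thm:ssfpras}, by combining it with a known spectral result for random bipartite regular graphs. The only real work is to verify that, under the degree hypothesis of the corollary, one can find a value of $\lambda$ which is simultaneously (i) a valid spectral upper bound for a typical random bipartite $\Delta$-regular graph and (ii) compatible with the hypothesis $\tfrac{\Delta}{\lambda}\geq \tfrac{100}{1-\delta}q^2\log(q\Delta)$ of Theorem~\ref{thm:ssfpras}.

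First I would invoke the spectral result of Brito, Dumitriu, and Harris~\cite{brito2018spectral} already quoted in the introduction: for any fixed $\epsilon>0$, a uniformly random $\Delta$-regular bipartite graph $G$ on $n+n$ vertices satisfies $\lambda(G)\leq 2\sqrt{\Delta-1}+\epsilon$ with high probability as $n\to\infty$. Taking for concreteness $\lambda := 3\sqrt{\Delta-1}$, this gives $\lambda(G)\leq\lambda$ with high probability, so $G\in\gbip$ with high probability. Note that since $3\sqrt{\Delta-1}<\Delta$ for all $\Delta\geq 3$, the bound $\lambda(G)<\Delta$ also forces connectivity of $G$, so no extra event needs to be accounted for.

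Next I would verify that this choice of $\lambda$ satisfies the first hypothesis of Theorem~\ref{thm:ssfpras}, namely $\tfrac{\Delta}{\lambda}\geq \tfrac{100}{1-\delta}q^2\log(q\Delta)$. With $\lambda=3\sqrt{\Delta-1}$ we have $\tfrac{\Delta}{\lambda}\geq \tfrac{1}{3}\sqrt{\Delta}$, so it suffices to check that $\sqrt{\Delta}\geq \tfrac{300}{1-\delta}q^2\log(q\Delta)$. But the standing hypothesis $\Delta\geq\bigl(\tfrac{10}{1-\delta}q\log(q\Delta)\bigr)^4$ of the corollary gives $\sqrt{\Delta}\geq \tfrac{100}{(1-\delta)^2}q^2\log^2(q\Delta)$, and the required bound follows from the very mild inequality $\tfrac{\log(q\Delta)}{1-\delta}\geq 3$, which is automatic in the regime $\Delta\geq\bigl(\tfrac{10}{1-\delta}q\log(q\Delta)\bigr)^4$ (since this already forces $\log(q\Delta)$ to be considerably larger than $3$). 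Both hypotheses of Theorem~\ref{thm:ssfpras} therefore hold for this choice of $\lambda$.

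Finally, I would appeal to Theorem~\ref{thm:ssfpras} applied to the class $\gbip$ with the above $\lambda$: conditional on the high-probability event $G\in\gbip$, the algorithm of Theorem~\ref{thm:ssfpras} produces an $\epsilon^*$-approximation to $Z_{G,H}$ and an $\epsilon^*$-sample from $\mu_{G,H}$ in time $O((n/\epsilon^*)^2\log^4(n/\epsilon^*))$, exactly as claimed. There is no genuine obstacle here; the argument is a straightforward composition, and the only minor calibration is the arithmetic verification that the polynomial-in-$\log(q\Delta)$ hypothesis on $\Delta$ dominates the $\sqrt{\Delta}$-versus-$\log(q\Delta)$ comparison needed to convert the Ramanujan-type spectral bound into the spectral gap required by Theorem~\ref{thm:ssfpras}.
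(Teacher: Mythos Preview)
Your proposal is correct and follows essentially the same approach as the paper: invoke the spectral bound from~\cite{brito2018spectral} to place a random bipartite $\Delta$-regular graph in $\gbip$ with high probability, verify that the resulting $\lambda$ satisfies the spectral-gap hypothesis of Theorem~\ref{thm:ssfpras} using the degree assumption, and then apply that theorem. The paper's version is terser (it takes $\lambda(G)\leq 2\sqrt{\Delta}$ directly and notes $\tfrac{\Delta}{\lambda(G)}\geq \tfrac{1}{2}\sqrt{\Delta}$ without spelling out the arithmetic or the connectivity point), but the argument is the same.
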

\begin{proof}
Using the result in \cite[Theorem 4]{brito2018spectral}, we have that, with high probability over the choice of $G$, it holds that $\lambda(G)\leq 2\sqrt{\Delta}$. It follows that $\tfrac{\Delta}{\lambda(G)}\geq \tfrac{1}{2}\sqrt{\Delta}$ and hence the result follows by applying Theorem~\ref{thm:ssfpras}.
\end{proof}

Our algorithms apply to a larger class of graphs when $\delta$ is small so that the interactions between spins are strong. By contrast, approaches such as MCMC and correlation decay apply when the interactions between spins are weak -- this corresponds to the so-called ``high-temperature'' regime, which is within the uniqueness phase of the infinite $\Delta$-regular tree. Since our results concern regular graphs, they easily extend to models with external fields -- the fields can be incorporated in the entries of the interaction matrix~$H$.

\subsection{Proof Outline}\label{sec:proofoutline}

In order to prove our main Theorem~\ref{thm:ssfpras}, we appeal to what are known as  
subset polymer models~\cite{KP, gruber1971general}.
Recently, polymer models have been used as a tool in the development of efficient counting and sampling algorithms~\cite{KP, helmuth2019algorithmic,  JKP,  liao2019counting, chen2019fast, cannon2020counting, Carlson} for problems that are not amenable to traditional approaches such as local-update Markov chains. Our approach is inspired by, and generalises, the approaches in \cite{JKP,chen2019fast}, where counting and sampling algorithms are given for the hard-core and ferromagnetic Potts models on expander graphs at low temperatures.

The main idea behind the use of polymer models is that, for graphs with good expansion properties, the partition function and the Gibbs distribution are dominated by configurations which are highly ordered, i.e., whose weight is large. As we shall see in detail in Section~\ref{sec:ground}, 
each of these large-weight configurations 
is close to a configuration that maps all vertices of~$G$ to a ``biclique'' of~$H$, as defined below.\footnote{The ``biclique'' terminology comes from the homomorphism problem (where $H$ corresponds to a graph), but our interpretation here is analogous which justifies its use.}

\begin{definition}\label{def:f334}
Let $q \geq 2$ be an integer, $\delta\in (0,1)$ be a real and $H \in \mathbb{R}_{\geq 0}^{q \times q}$ be a symmetric $\delta$-matrix. A \textit{biclique} of $H$ is a pair $(B_0, B_1)$ with $B_0, B_1 \subseteq [q]$, such that $H_{i, j} = 1$ for all $i \in B_0$ and all $j \in B_1$. We use $\calK_H$ to denote the set of all bicliques of $H$ and we use $\kmax$ to denote the set of all inclusion maximal bicliques of $H$.
\end{definition} 

Given a bipartite graph $G$, configurations $\sigma$ which assign vertices in $V_G^0$ a spin from $B_0$ and 
vertices in $V_G^1$ a spin from $B_1$ have weight 1. This is  the largest possible weight that any configuration can have, since $H$ is a $\delta$-matrix. 
Polymer models allow us to capture deviations from such configurations 
and to approximate their contribution to the partition function, see Section~\ref{sec:3f3frr}. Using the results of Sections~\ref{sec:ground} and~\ref{sec:3f3frr}, we give the proof of Theorem~\ref{thm:ssfpras} in Section~\ref{sec:final}.
 
 In some situations where
 Theorem~\ref{thm:ssfpras} 
 provides an FPRAS 
  for $\spin_{H, \Delta, \lambda}$
  it is easy to see that traditional approaches such as  Glauber dynamics
  do not give good approximation algorithms.
  In particular, when multiple ground states make non-negligible contributions to the partition function,
  these ground states provide a ``constriction in the state space'' which could also be used to prove that Glauber dynamics mixes slowly.
  Note, however, that there are many interaction matrices
  with a unique inclusion-maximal biclique and it is unclear whether Glauber dynamics would be slow.

 The inputs to $\spin_{H, \Delta, \lambda}$
 need to be bipartite graphs so that we can obtain an algorithm for all possible $H$. When the input $G$ is allowed to be non-bipartite, it is very unlikely that there is an algorithm that can work for general spin systems. To see this, consider the case where $H$ is the matrix $\left(\begin{smallmatrix}0 & 1\\ 1& 1\end{smallmatrix}\right)$. Recall
 from the remark following Definition~\ref{def:deltamatrix}
that this is a $\delta$-matrix for any $\delta \in (0,1)$. 
Spin configurations with this matrix~$H$ can be viewed as independent sets of~$G$ (where spin~$0$ means ``in the independent set'').
However, 
it is conjectured to be hard to find a maximum independent set on random $\Delta$-regular graphs for large $\Delta$, and the approximate 
counting problem that we study here is likely to be even harder.  
(This is in contrast to the bipartite case, where finding a maximum weight configuration is trivial since this is just a configuration assigning spins from a biclique.)

Despite the lack of a general result for non-bipartite graphs,
for certain spin systems,  our methods could be extended to allow non-bipartite inputs. For example, this occurs when all maximal bicliques $(B_0,B_1)$ of~$H$
have $B_0=B_1$ (so they can be viewed as ``cliques'' of~$H$). This is the case, for example, in the ferromagnetic Potts model, and it is the case more generally in ``ferromagnetic'' spin systems (where the interaction matrix has only positive eigenvalues, see, e.g., \cite{ferro}).

\section{Preliminaries}

Let $G$ be a bipartite graph. We will write $G$ as $(V_G^0, V_G^1, E_G)$, where $(V_G^0, V_G^1)$ denotes the bipartition of the vertex set of $G$ and $E_G$ its edge set; we will use $V_G= V_G^0\cup V_G^1$ to denote the  vertex set of $G$.

 For a vertex subset $S \subseteq V_G$, let $\partial_G S$ denote the set of vertices of $V_G \setminus S$ that have a neighbour in $S$, and let $S_G^+ = S \cup \partial_G S$. When $S=\{u\}$, we simply write $\partial_G u$ instead of $\partial_G S$. For vertex subsets $S, T \subseteq V_G$, let $E_G(S, T)$ denote the set of edges of $G$ that have one endpoint in $S$ and the other in $T$, and let $e_G(S, T) = |E_G(S, T)|$; when $S=T$, we simply write $E_G(S), e_G(S)$ instead of $E_G(S,S), e_G(S,S)$, respectively.  We will omit $G$ from all of the above notation where it is obvious from the context.

\subsection{Bipartite Expander Graphs}

 It is well-known that graphs in $\gbip$ have good expansion properties, and in this section we will review certain edge and vertex expansion properties that will be relevant for us.

The following result relates the spectrum of a regular bipartite graph to its edge-expansion properties. This result was first proven in~\cite[Theorem 5.1]{haemers1995interlacing}, though the version we state below is taken from~\cite{de2012large}.
\begin{lemma}[{\cite[Lemma 8]{de2012large}}]
\label{lem:emlemma}
Let $G = (V_G^0, V_G^1, E_G) \in \gbip$ with $n=|V_G^0|=|V_G^1|$. Then, for  sets $S_0 \subseteq V_G^0, S_1 \subseteq V_G^1$,  we have that
\[
\Big|e_G(S_0, S_1) - \frac{\Delta |S_0| |S_1|}{n} \Big| \leq \lambda \sqrt{|S_0| |S_1| \left( 1 - \frac{|S_0|}{n}\right) \left( 1 - \frac{|S_1|}{n}\right)}.
\]
\end{lemma}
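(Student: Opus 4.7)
The plan is to use the spectral decomposition of the adjacency matrix, viewed in its bipartite block form. Write the adjacency matrix of $G$ as
\[
A = \begin{pmatrix} 0 & B \\ B^T & 0 \end{pmatrix},
\]
where $B \in \{0,1\}^{V_G^0 \times V_G^1}$ is the biadjacency matrix. A standard fact about matrices of this block form is that the eigenvalues of $A$ are exactly $\pm \sigma_i(B)$, where $\sigma_1(B) \ge \sigma_2(B) \ge \dots$ are the singular values of $B$. Since $G$ is $\Delta$-regular bipartite, the top singular value is $\sigma_1(B) = \Delta$, attained by the unit vectors $u_0 = \mathbf{1}_{V_G^0}/\sqrt{n}$ and $u_1 = \mathbf{1}_{V_G^1}/\sqrt{n}$ (i.e., $B u_1 = \Delta u_0$ and $B^T u_0 = \Delta u_1$). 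The second singular value is $\sigma_2(B) = \lambda_2(A) = \lambda$.

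Next I would express $e_G(S_0, S_1) = \mathbf{1}_{S_0}^T B \, \mathbf{1}_{S_1}$ and decompose each indicator vector as parallel plus orthogonal to the top singular vector. Writing $\mathbf{1}_{S_0} = \alpha u_0 + v_0$ and $\mathbf{1}_{S_1} = \beta u_1 + v_1$ with $v_0 \perp u_0$, $v_1 \perp u_1$, one gets $\alpha = |S_0|/\sqrt{n}$ and $\beta = |S_1|/\sqrt{n}$. Expanding the bilinear form and using $u_0^T B u_1 = \Delta$ together with $u_0^T B v_1 = (B^T u_0)^T v_1 = \Delta u_1^T v_1 = 0$, and similarly $v_0^T B u_1 = 0$, we obtain
\[
e_G(S_0, S_1) = \alpha\beta\Delta + v_0^T B v_1 = \frac{\Delta |S_0| |S_1|}{n} + v_0^T B v_1.
\]

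The final step is to bound the error term $|v_0^T B v_1|$ using the variational characterisation of the second singular value: since $v_0$ and $v_1$ lie in the orthogonal complements of the top singular vectors,
\[
|v_0^T B v_1| \le \sigma_2(B)\, \|v_0\|\, \|v_1\| = \lambda\, \|v_0\|\, \|v_1\|.
\]
A direct computation of the orthogonal components gives $\|v_0\|^2 = \|\mathbf{1}_{S_0}\|^2 - \alpha^2 = |S_0|(1 - |S_0|/n)$ and similarly $\|v_1\|^2 = |S_1|(1 - |S_1|/n)$, which combined with the previous display yields the claimed bound.

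The main conceptual point to get right is the identification $\sigma_2(B) = \lambda_2(A) = \lambda$ in the bipartite setting, and specifically the fact that $v_0$ and $v_1$ are orthogonal to the \emph{unique} top singular vectors on each side (so no other $\pm \Delta$ eigenvalue interferes). This uses that $G$ is connected, which ensures $\Delta$ is a simple eigenvalue of $A$ (with $-\Delta$ also simple on the bipartite side). Everything else is linear algebra; the proof is essentially the standard expander mixing lemma adapted to the bipartite case.
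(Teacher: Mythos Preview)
Your proof is correct and is the standard spectral argument for the bipartite expander mixing lemma. Note, however, that the paper does not actually supply a proof of this lemma: it is quoted verbatim from \cite[Lemma 8]{de2012large} (and attributed originally to Haemers), so there is no in-paper proof to compare against. Your write-up would serve perfectly well as a self-contained proof; the one point you correctly flag --- that connectedness of $G$ (built into the definition of $\gbip$) guarantees the top singular value $\Delta$ of $B$ is simple, so that orthogonality to $u_0,u_1$ genuinely drops you to the $\sigma_2$-part of the SVD --- is the only place any care is needed.
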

The following simple consequence of the above result gives a lower bound on the edge expansion of $G \in \gbip$ when $\lambda$ is sufficiently small.
\begin{corollary}\label{lem:edgeexp}
Let $G = (V_G^0, V_G^1, E_G) \in \gbip$ with $n=|V_G^0|=|V_G^1|$. Then, for  sets $S_0 \subseteq V_G^0, S_1 \subseteq V_G^1$ such that $\lambda \leq \tfrac{\Delta}{2n}\sqrt{|S_0||S_1|}$, it holds that $e_G(S_0, S_1) \geq \tfrac{\Delta}{2n} |S_0| |S_1|$.
\end{corollary}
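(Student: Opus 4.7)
The plan is to apply Lemma~\ref{lem:emlemma} directly and then use the hypothesis on $\lambda$ to absorb the error term into half of the main term. From Lemma~\ref{lem:emlemma}, we have the lower bound
\[
e_G(S_0, S_1) \;\geq\; \frac{\Delta |S_0| |S_1|}{n} \;-\; \lambda \sqrt{|S_0| |S_1| \left( 1 - \tfrac{|S_0|}{n}\right) \left( 1 - \tfrac{|S_1|}{n}\right)}.
\]
Since $0\leq |S_0|,|S_1|\leq n$, the factor $\sqrt{(1-|S_0|/n)(1-|S_1|/n)}$ is at most $1$, so the subtracted error term is bounded above by $\lambda\sqrt{|S_0||S_1|}$.

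Now I would plug in the hypothesis $\lambda \leq \tfrac{\Delta}{2n}\sqrt{|S_0||S_1|}$. This gives $\lambda \sqrt{|S_0||S_1|} \leq \tfrac{\Delta}{2n}|S_0||S_1|$, so the error term is at most half of the main term $\tfrac{\Delta|S_0||S_1|}{n}$. Substituting, we obtain
\[
e_G(S_0, S_1) \;\geq\; \frac{\Delta |S_0||S_1|}{n} - \frac{\Delta |S_0||S_1|}{2n} \;=\; \frac{\Delta}{2n}|S_0||S_1|,
\]
which is the desired bound.

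There is no real obstacle here: the corollary is a one-line consequence of Lemma~\ref{lem:emlemma}. The only things to check are the trivial estimate $(1-|S_0|/n)(1-|S_1|/n)\leq 1$ and the algebraic rearrangement enabled by the hypothesis on $\lambda$. One should also note that the statement is vacuous (or trivial) if either $|S_0|=0$ or $|S_1|=0$, since then both sides are zero, so no degeneracy arises.
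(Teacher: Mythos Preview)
Your proof is correct and follows essentially the same approach as the paper: apply Lemma~\ref{lem:emlemma}, bound the factor $\sqrt{(1-|S_0|/n)(1-|S_1|/n)}$ by~$1$, and then use the hypothesis on~$\lambda$ to absorb the error term into half of the main term. The paper's proof is simply a more compressed version of yours.
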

\begin{proof}
Lemma~\ref{lem:emlemma} implies that
\[\Big|e_G(S_0, S_1) - \frac{\Delta |S_0| |S_1|}{n} \Big| \leq \frac{\Delta |S_0| |S_1|}{2n}, \text{ therefore }  e_G(S_0, S_1) \geq \frac{\Delta |S_0| |S_1|}{2n}.
\qedhere\]
\end{proof}

A second combinatorial notion of expansion is vertex expansion. A well-known result from Tanner~\cite{tanner1984explicit} relates the spectrum of a graph to its vertex expansion properties (see also \cite{Kahale} for a more refined estimate). Here we state a version from~\cite[Theorem 4.15]{hoory2006expander}; there, the result is stated and proved for non-bipartite graphs, but a minor adaptation of the proof in \cite{hoory2006expander}, which we give for completeness here, also applies to bipartite graphs.
\begin{lemma}\label{lem:vtxexpansion}
Let $G = (V_G^0, V_G^1, E_G) \in \gbip$, $\rho>0$ be a real number and $i\in\{0,1\}$. Then,  for  all $S \subseteq V_G^i$ with $|S| \leq \rho |V_G^i|$, it holds that $|\partial S| \geq |S|/\big(\rho + \frac{\lambda^2}{\Delta^2}(1 - \rho)\big)$.
\end{lemma}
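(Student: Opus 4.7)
The plan is to adapt the standard spectral proof of Tanner's theorem (as given in \cite{hoory2006expander}) to the bipartite setting, by working with the rectangular biadjacency matrix rather than the full adjacency matrix. By symmetry between the two sides it suffices to treat $i=0$; write $n=|V_G^0|=|V_G^1|$ and let $B \in \{0,1\}^{V_G^1 \times V_G^0}$ be the biadjacency matrix, so that the full adjacency matrix of $G$ has block form $\bigl(\begin{smallmatrix} 0 & B^T \\ B & 0\end{smallmatrix}\bigr)$. The singular values of $B$ are exactly the non-negative eigenvalues of this matrix, so the top singular value of $B$ is $\Delta$ and all subsequent singular values are at most $\lambda$. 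Moreover, since $G$ is $\Delta$-regular and bipartite, $B\mathbf{1}_{V_G^0} = \Delta\,\mathbf{1}_{V_G^1}$, so $\mathbf{1}_{V_G^0}/\sqrt{n}$ is the top right singular vector of $B$.

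The key computation is to estimate $\|B\mathbf{1}_S\|^2$ in two ways. Decompose
\[
\mathbf{1}_S = \tfrac{|S|}{n}\,\mathbf{1}_{V_G^0} + w,\qquad w \perp \mathbf{1}_{V_G^0},
\]
so that $\|w\|^2 = |S|(1-|S|/n)$. Since $Bw \perp B\mathbf{1}_{V_G^0} = \Delta\mathbf{1}_{V_G^1}$ (by the transpose identity $\langle Bw,\mathbf{1}_{V_G^1}\rangle=\langle w, \Delta\mathbf{1}_{V_G^0}\rangle=0$) and $\|Bw\|^2 \le \lambda^2\|w\|^2$, we get by Pythagoras
\[
\|B\mathbf{1}_S\|^2 \;=\; \tfrac{|S|^2\Delta^2}{n} + \|Bw\|^2 \;\le\; \tfrac{|S|^2\Delta^2}{n} + \lambda^2|S|\bigl(1-\tfrac{|S|}{n}\bigr).
\]
For the lower bound, note that $B\mathbf{1}_S$ is supported on $\partial S$ and its coordinate sum is $e_G(S,V_G^1)=\Delta|S|$, so Cauchy--Schwarz gives $\|B\mathbf{1}_S\|^2 \ge (\Delta|S|)^2/|\partial S|$.

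Combining the two bounds and dividing through yields
\[
|\partial S| \;\ge\; \frac{|S|}{\tfrac{|S|}{n} + \tfrac{\lambda^2}{\Delta^2}\bigl(1-\tfrac{|S|}{n}\bigr)}.
\]
Since $\lambda<\Delta$, the function $x\mapsto x + \tfrac{\lambda^2}{\Delta^2}(1-x)$ is strictly increasing, so replacing $|S|/n$ in the denominator by its upper bound $\rho$ only weakens the inequality, giving the claimed bound $|\partial S|\ge |S|\bigl/\bigl(\rho + \tfrac{\lambda^2}{\Delta^2}(1-\rho)\bigr)$.

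There is no real obstacle here beyond being careful with the bipartite decomposition: the main point that differs from the non-bipartite Tanner proof in \cite{hoory2006expander} is that the top singular vector of $B$ is the indicator of one whole side (not a global uniform vector on $V_G$), so the orthogonal complement is taken inside $\mathbb{R}^{V_G^0}$ rather than inside $\mathbb{R}^{V_G}$. Once this is set up, the rest is an elementary Rayleigh-quotient estimate together with a single Cauchy--Schwarz, and the monotonicity in $|S|/n$ supplies the reduction from $|S|\le\rho n$ to the stated uniform bound.
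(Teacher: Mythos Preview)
Your proof is correct and essentially identical to the paper's. Both work with the biadjacency matrix, decompose $\mathbf{1}_S$ into its component along the all-ones vector and an orthogonal remainder, bound $\|B\mathbf{1}_S\|^2$ above via the spectral gap and below via Cauchy--Schwarz on the support $\partial S$, and combine; the only cosmetic difference is that the paper substitutes $|S|/n\le\rho$ inside the upper-bound computation, whereas you first obtain the sharper bound in terms of $|S|/n$ and then invoke monotonicity of $x\mapsto x+\tfrac{\lambda^2}{\Delta^2}(1-x)$ at the end.
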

\begin{proof}
We prove the result for $i=0$; the case $i=1$ is symmetric. Let $S\subseteq V_G^0$ be such that $|S|\leq \rho n$ where $n=|V_G^0|=|V_G^1|$. We will show that $|\partial S| \geq |S|/(\rho + \frac{\lambda^2}{\Delta^2}(1 - \rho))$.

Let $A=A_G$ denote the adjacency matrix of $G$, and $B$ be the bi-adjacency matrix of $G$, so that $A=\big[\begin{smallmatrix} 0 & B\\ B^T& 0\end{smallmatrix}\big]$.
Then, we have that $\mu$ is an eigenvalue of $A$ iff $\mu^2$ is an eigenvalue of $B B^T$. 
Let  $v_1,\hdots,v_n$ be orthonormal eigenvectors of $B B^T$, with eigenvalues $\lambda_1^2,.., \lambda_n^2$, where $\lambda_1,\hdots, \lambda_n$ are the eigenvalues of $A$ in decreasing order. Note that  $\lambda_1^2 =\Delta^2$, $\lambda_2^2 =\lambda^2$ and since $G$ is $\Delta$-regular, we have that  $v_1=\frac{1}{\sqrt{n}}e_1$, where $e_1$ is the $n$-dimensional vector with all ones. 

From the spectral theorem, we have the decomposition  
\[BB^T= \sum_{i\in [n]} \lambda_i^2 v_i v_i^T.\]
Let $1_S$ be the $n$-dimensional vector whose $i$-th entry is equal to 1 if $i\in S$ and 0 otherwise. Since $v_1,\hdots,v_n$ is a basis of $\mathbb{R}^n$, we can write 
\begin{equation}\label{eq:gtgtgbt3}
1_S=\sum_{i\in [n]} \alpha_i v_i \mbox{ for some real numbers } \alpha_1,\hdots, \alpha_n.
\end{equation}
Using the orthonormality of $v_1,\hdots,v_n$, we obtain by multiplying \eqref{eq:gtgtgbt3} with $v_1^{T}$ that   $\alpha_1=|S|/\sqrt{n}\leq \rho\sqrt{n}$ and by considering the norm of $1_S$ that  $\sum_{i\in [n]}\alpha_i^2=|S|$. Hence, we have that 
\begin{equation}\label{eq:tgg5ygy5}
\left\|1_S^T B\right\|^2=  1_S^T B B^T 1_S= \sum_{i\in [n]} \lambda_i^2 \alpha_i^2\leq  \Delta^2\alpha_1^2+ \lambda^2(|S|-\alpha_1^2)\leq |S|\big(\Delta^2\rho+ \lambda ^2(1 -\rho)\big).
\end{equation}
For $j\in [n]$, let $u_j$ be the $j$-th entry in $1_S^T B$; observe that  $u_j$ is the number of neighbours  in $S$ of the $j$-th vertex in  $V_G^{1}$ and hence there are exactly $|\partial S|$ non-zero entries in $1_S^T B$. Moreover, since $G$ is $\Delta$-regular, we have that $\sum_{j\in \partial S}u_j=\sum_{j\in [n]}u_j=1_S^T B e_1=\Delta|S|$. From the Cauchy–Schwarz inequality, it follows that 
\begin{equation}\label{eq:tgg5ygy5b}
\left\|1_S^T B\right\|^2= \sum_{j\in [n]} u_j^2=\sum_{j\in \partial S} u_j^2 \geq  \frac{1}{|\partial S|}\Big(\sum_{j\in \partial S} u_j\Big)^2= \frac{1}{|\partial S|}\Big(\sum_{j\in [n]} u_j\Big)^2=\Delta^2|S|^2/|\partial S|.
\end{equation}  
Combining \eqref{eq:tgg5ygy5} and \eqref{eq:tgg5ygy5b} yields the desired inequality.
\end{proof}

\section{Ground states for spin configurations}\label{sec:ground}

In this section, we show that the partition function of a spin system is dominated by configurations which are ``close to maximal bicliques'', cf. Definition~\ref{def:f334}. Let $q\geq 2$, $\Delta\geq 3$ be integers and $\lambda, \delta$ be reals with $\lambda\in (0,\Delta)$ and $\delta\in (0,1)$. Let $G \in \gbip$ and let $H \in \mathbb{R}^{q \times q}_{\geq 0}$ be a symmetric $\delta$-matrix.  

We next describe more precisely the configurations  which are ``close'' to some maximal biclique of $H$. Given $\sigma: V_G\rightarrow [q]$ and a spin $i\in [q]$, we write $\sigma^{-1}(i)$ for the set of vertices of $G$ whose image  under $\sigma$ is $i$. More generally, for   a subset of spins $Q \subseteq [q]$, we  let $\sigma^{-1}(Q)=\{v\in V_G\mid \sigma(v)\in Q\}$.

\begin{definition}\label{def:rg5g5gge} 
Let $\epsilon \in (0, 1)$. For $(B_0, B_1) \in \kmax$, define $\Sigma_{G, H, \epsilon}^{B_0, B_1}$ to be the set of spin configurations $\sigma \in \Sigma_{G, H}$ for which 
\[\big|\sigma^{-1}(B_0) \cap V_G^0\big| + \big|\sigma^{-1}(B_1) \cap V_G^1\big| \geq (1 - \epsilon)|V_G|.\]
We define $\Sigma_{G, H, \epsilon}$ to be the union of the sets $\Sigma_{G, H, \epsilon}^{B_0, B_1}$ over all bicliques $(B_0, B_1)\in \kmax$ and define $Z_{G, H, \epsilon} :=  \sum_{\sigma \in \Sigma_{G, H, \epsilon}} w_{G, H}(\sigma)$.\qed
\end{definition}

The following result shows that $Z_{G, H, \epsilon}$ gives a close approximation to $Z_{G, H}$ whenever $\epsilon$ is sufficiently large relative to $\lambda,\Delta,q$.\footnote{Note, in Lemma~\ref{lem:gs}, as in other lemmas as well, our assumed inequalities for $\epsilon$ impose some restrictions on $\Delta,\lambda,q$ to ensure that such an $\epsilon$ exists. These restrictions will be carefully accounted for when we apply these lemmas; namely, in the proof of Theorem~\ref{thm:ssfpras}.}

\begin{lemma}
\label{lem:gs}
Let $\epsilon \in (0, 1)$  be such that  $\epsilon \geq 2q \lambda/\Delta$ and $\epsilon^2 \geq \frac{8q^2\log q}{\Delta \log(1/\delta)}$. Then, for $G \in \gbip$ with $n=|V^0_G|=|V^1_G|$, we have that $Z_{G, H, \epsilon}$ is an $e^{-n}$-approximation to $Z_{G, H}$.
\begin{proof}
We associate each spin configuration $\sigma \in \Sigma_{G, H}$ with a pair of spin subsets $\big(B_0(\sigma), B_1(\sigma)\big)$ by setting for $i\in \{0,1\}$
\[
B_i(\sigma) = \Big\{ j \in [q] : \big|\sigma^{-1}(j) \cap V_G^i\big| \geq \frac{\epsilon n}{q} \Big\}.
\]
Note that for $\sigma \in \Sigma_{G, H}$,  there are fewer than $\epsilon n$ vertices of $V_G^i$ which are \textit{not} assigned spins from $B_i(\sigma)$. Also note that, since $\epsilon\in (0,1)$ and $|V_G^i| = n$, we have that $B_i(\sigma) \neq \emptyset$ for $i \in \{0, 1\}$.

Fix arbitrary $\sigma \in \Sigma_{G, H}$. We first show  that 
\begin{equation}\label{eq:ssgrtgt12}
\mbox{ either $\big(B_0(\sigma), B_1(\sigma)\big) \in \calK_H$ or $w_{G, H}(\sigma) \leq \delta^{\Delta \epsilon^2 n/(2q^2)}$,}
\end{equation}
i.e., either $(B_0(\sigma), B_1(\sigma))$ is a biclique of $H$ or $\sigma$ has small weight. For $i \in \{0, 1\}$, consider arbitrary $j_i \in B_i(\sigma)$ and let $S_i = \sigma^{-1}(j_i) \cap V_G^i$. Since $|S_i| \geq \epsilon n/q$, we have that $\tfrac{\Delta}{2n} \sqrt{|S_0| |S_1|} \geq \Delta \epsilon/(2q) \geq \lambda$, thus it follows from Corollary~\ref{lem:edgeexp} that 
\[e(S_0, S_1) \geq \frac{\Delta |S_0| |S_1|}{2n} \geq \frac{\Delta \epsilon^2 n}{2q^2}.\]
It follows that, if $H_{j_0, j_1} \leq \delta$, then $w_{G, H}(\sigma) \leq \delta^{\Delta \epsilon^2 n/(2q^2)}$; otherwise, $H_{j_0, j_1} = 1$. Since $j_0,j_1$ were arbitrary spins in $B_0(\sigma),B_1(\sigma)$, respectively, we conclude \eqref{eq:ssgrtgt12}. 

Let $\sigma$ be such that $\big(B_0(\sigma), B_1(\sigma)\big) \in \calK_H$. Then there exists $(B_0, B_1) \in \kmax$ such that $B_i(\sigma) \subseteq B_i$ for $i \in \{0, 1\}$. Moreover, for $i \in \{0, 1\}$ and $j \in [q] \setminus B_i$, we have that $\big|\sigma^{-1}(j) \cap V_G^i\big| < \epsilon n/q$ and therefore that $|\sigma^{-1}([q] \setminus B_i) \cap V_G^i| < \epsilon n$. Hence, we conclude that 
\[\big|\sigma^{-1}(B_0) \cap V_G^0\big| + \big|\sigma^{-1}(B_1) \cap V_G^1\big| \geq (1 - \epsilon)|V_G|.\] 

Combining this with  \eqref{eq:ssgrtgt12}, we obtain that for all $\sigma \in \Sigma_{G, H} \setminus \Sigma_{G, H, \epsilon}$ it holds that   $w_{G, H}(\sigma) \leq \delta^{\Delta \epsilon^2 n/(2q^2)}$, and hence
\[
Z_{G, H} - Z_{G, H, \epsilon} \leq \sum_{\sigma \in \Sigma_{G, H} \setminus \Sigma_{G, H, \epsilon}} \delta^\frac{\Delta \epsilon^2 n}{2q^2} \leq q^{2n} \delta^\frac{\Delta \epsilon^2 n}{2q^2}\leq q^{-2n},
\]
where in the last inequality we used that $\Delta \geq \frac{8q^2\log(q)}{\epsilon^2 \log(1/\delta)}$. The result follows since $Z_{G, H} \geq 1$; this bound can be seen by considering the configuration that maps $V_G^0$ to $j_0$ and $V_G^1$ to $j_1$, where $j_0, j_1 \in [q]$ are such that $H_{j_0, j_1} = 1$.
\end{proof}
\end{lemma}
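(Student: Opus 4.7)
The plan is to show that any configuration $\sigma$ is either ``close'' to lying in $\Sigma_{G,H,\epsilon}$ or else has tiny weight. For each $\sigma\in \Sigma_{G,H}$ and $i\in\{0,1\}$, I will define the set of ``dominant'' spins $B_i(\sigma):=\{j\in [q]\mid |\sigma^{-1}(j)\cap V_G^i|\geq \epsilon n/q\}$. By construction, fewer than $\epsilon n$ vertices in $V_G^i$ are assigned a spin outside $B_i(\sigma)$, so whenever $(B_0(\sigma),B_1(\sigma))$ is contained in some maximal biclique $(B_0,B_1)$, the configuration $\sigma$ already lies in $\Sigma_{G,H,\epsilon}^{B_0,B_1}\subseteq \Sigma_{G,H,\epsilon}$, and contributes in full to $Z_{G,H,\epsilon}$.

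The key step is to control the remaining configurations, i.e.\ those for which $(B_0(\sigma),B_1(\sigma))$ is not a biclique of $H$. In that case I will pick $j_0\in B_0(\sigma)$, $j_1\in B_1(\sigma)$ with $H_{j_0,j_1}\leq \delta$ and set $S_i=\sigma^{-1}(j_i)\cap V_G^i$. Since $|S_0|,|S_1|\geq \epsilon n/q$, the first hypothesis $\epsilon\geq 2q\lambda/\Delta$ is precisely what is needed to check $\frac{\Delta}{2n}\sqrt{|S_0||S_1|}\geq \lambda$, so Corollary~\ref{lem:edgeexp} applies and yields $e(S_0,S_1)\geq \Delta\epsilon^2 n/(2q^2)$. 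Each such edge contributes a factor of at most $\delta$ to $w_{G,H}(\sigma)$, giving the uniform weight bound $w_{G,H}(\sigma)\leq \delta^{\Delta\epsilon^2 n/(2q^2)}$ for every ``bad'' $\sigma$.

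To finish, I will combine this with the crude count $|\Sigma_{G,H}|\leq q^{2n}$ to obtain $Z_{G,H}-Z_{G,H,\epsilon}\leq q^{2n}\delta^{\Delta\epsilon^2 n/(2q^2)}$, and the second hypothesis $\epsilon^2\geq 8q^2\log q/(\Delta\log(1/\delta))$ is calibrated precisely so that this upper bound is at most $q^{-2n}$. Since $q\geq 2$ we then have $Z_{G,H}-Z_{G,H,\epsilon}\leq e^{-n}$. Finally, I will observe that $Z_{G,H}\geq 1$ by exhibiting the constant configuration that sends $V_G^0$ to $j_0$ and $V_G^1$ to $j_1$ for any pair with $H_{j_0,j_1}=1$ (such a pair exists because $H$ has largest entry~$1$). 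Dividing the absolute error by this lower bound converts it into the claimed $e^{-n}$ multiplicative approximation.

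The argument is conceptually simple; the main obstacle is bookkeeping, namely threading the two hypotheses on $\epsilon$ through the proof at exactly the right places: one to enable the edge-expansion corollary (which forces $\epsilon$ to scale with $q\lambda/\Delta$), and the other to beat the $q^{2n}$ union bound over configurations (which forces $\epsilon^2$ to scale with $q^2\log q/(\Delta\log(1/\delta))$). The only subtlety beyond this is noting that the existence of a maximal biclique $(B_0,B_1)\supseteq (B_0(\sigma),B_1(\sigma))$ whenever $(B_0(\sigma),B_1(\sigma))\in \calK_H$ is automatic, so that a single pair $(j_0,j_1)$ with $H_{j_0,j_1}\leq \delta$ among the dominant spins is enough to trigger the weight bound.
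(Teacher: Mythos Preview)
Your proposal is correct and follows essentially the same approach as the paper: you define the identical ``dominant spin'' sets $B_i(\sigma)$, use Corollary~\ref{lem:edgeexp} with the hypothesis $\epsilon\geq 2q\lambda/\Delta$ to force many $\delta$-edges when $(B_0(\sigma),B_1(\sigma))$ is not a biclique, extend to a maximal biclique otherwise, and finish with the same union bound $q^{2n}\delta^{\Delta\epsilon^2 n/(2q^2)}\leq q^{-2n}$ and the observation $Z_{G,H}\geq 1$. The only cosmetic omission is that the paper explicitly notes $B_i(\sigma)\neq\emptyset$ (which follows from $\epsilon<1$ by averaging), but this is implicit in your argument anyway.
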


For our approximation algorithms, it will be useful to consider the following quantities.
\begin{definition}\label{def:zhateps}
For $\epsilon\in (0,1)$, let
\[\widehat{Z}_{G, H, \epsilon} = \sum_{(B_0, B_1) \in \kmax} \sum_{\sigma \in \Sigma_{G, H, \epsilon}^{B_0, B_1}} w_{G, H}(\sigma) \quad\mbox{ and }\quad Z_{G, H, \epsilon}^{\overlap}:=\sum_{\sigma \in \Sigma_{G, H, \epsilon}^{\overlap}}w_{G,H}(\sigma),\]
where $\Sigma_{G, H, \epsilon}^{\overlap}:=\bigcup_{(B_0,B_1)\in \kmax}\bigcup_{(C_0,C_1)\in \kmax\backslash \{(B_0,B_1)\}} \big(\Sigma^{B_0, B_1}_{G, H,\epsilon} \cap \Sigma^{C_0, C_1}_{G, H, \epsilon}\big)$.
\end{definition}

The following lemma shows, given a value of $\epsilon$ that is sufficiently small, that $\widehat{Z}_{G, H, 3\epsilon}$ is a close approximation to $Z_{G, H, \epsilon}$ by showing that $Z_{G, H, 3\epsilon}^{\overlap}$ is small relative to $Z_{G, H, \epsilon}$.
\begin{lemma}\label{lem:zhatepsappxheps}
Let  $\epsilon \in (0, \tfrac{1}{240q\log q}]$ be such that $\epsilon^2 \geq \frac{8q^2\log(q)}{\Delta \log(1/\delta)}$ and  $\epsilon \geq 2q \tfrac{\lambda}{\Delta}$. Then, for  all $G \in \gbip$ with $n=|V^0_G|=|V^1_G|$ sufficiently large, we have that $Z_{G, H, 3\epsilon}^{\overlap}\leq e^{-n/(3q)}Z_{G, H, \epsilon}$ and that $\widehat{Z}_{G, H, 3 \epsilon}$ is an $e^{-n/(4q)}$-approximation to $Z_{G, H, \epsilon}$.
\begin{proof}
We will show the result for all integers $n$ satisfying $n^2 2^{6q}e^{3n/(5q)}\leq e^{2n/(3q)}\big(1-\frac{2q}{n}\big)^2$. 

Since $\Sigma_{G, H, 3\epsilon}=\bigcup_{(B_0, B_1) \in \kmax}  \Sigma^{B_0, B_1}_{G, H, 3\epsilon}$ and $\Sigma_{G, H, \epsilon} \subseteq \Sigma_{G, H, 3\epsilon}$, we have 
\[\widehat{Z}_{G, H, 3\epsilon}\geq Z_{G, H, 3\epsilon}\geq Z_{G, H, \epsilon}.\] 
Moreover, each $\sigma \in \Sigma_{G, H, 3\epsilon}^{\overlap}$ is accounted for at most $|\kmax|\leq 2^{2q}$ times in $\widehat{Z}_{G, H, 3\epsilon}$, therefore
\[\widehat{Z}_{G, H, 3\epsilon} \leq 2^{2q}\widehat{Z}_{G, H, 3\epsilon}^{\overlap}+\sum_{\sigma \in \Sigma_{G, H, 3\epsilon}} w_{G, H}(\sigma).\]
Since $\Sigma_{G, H, \epsilon} \subseteq \Sigma_{G, H, 3\epsilon}$, we therefore have that
\begin{equation}
\label{eq:zhatzdiff}
\begin{aligned}
\widehat{Z}_{G, H, 3\epsilon} - Z_{G, H, \epsilon}& \leq 2^{2q}\widehat{Z}_{G, H, 3\epsilon}^{\overlap}+\sum_{\sigma \in \Sigma_{G, H, 3\epsilon} \setminus \Sigma_{G, H, \epsilon}} w_{G, H}(\sigma)\\
&\leq 2^{2q}\widehat{Z}_{G, H, 3\epsilon}^{\overlap}+Z_{G, H} - Z_{G, H, \epsilon}\leq 2^{2q}\widehat{Z}_{G, H, 3\epsilon}^{\overlap}+e^{-n} Z_{G, H, \epsilon},
\end{aligned}
\end{equation}
where the last inequality follows from Lemma~\ref{lem:gs}. The lemma will thus follow by showing 
\begin{equation}\label{eq:dblcount}
2^{2q}\widehat{Z}_{G, H, 3\epsilon}^{\overlap} \leq e^{-n/(3q)} Z_{G, H, \epsilon},
\end{equation}
since then, from~\eqref{eq:zhatzdiff} and~\eqref{eq:dblcount}, we obtain that  
\[\widehat{Z}_{G, H, 3\epsilon} - Z_{G, H, \epsilon} \leq \big(e^{-n} + e^{-n/(3q)}\big) \cdot Z_{G, H, \epsilon} \leq e^{-n/(4q)} \cdot Z_{G, H, \epsilon}.
\]

It remains to prove \eqref{eq:dblcount}. Consider $(B_0, B_1), (C_0, C_1) \in \kmax$ such that $(B_0, B_1) \neq (C_0, C_1)$ and let us upper bound the aggregate weight of  configurations in the set $F:=\Sigma^{B_0, B_1}_{G, H, 3\epsilon} \cap \Sigma^{C_0, C_1}_{G, H, 3\epsilon}$; in fact, we will just upper bound $|F|$ and use the trivial upper bound~1 on the weight of a configuration. For $\sigma\in F$, let  $S=\cup_{i\in \{0,1\}}(V_G^i\backslash \sigma^{-1}(B_i))$ and $T=\cup_{i\in \{0,1\}}(V_G^i\backslash \sigma^{-1}(C_i))$, so that  $|S|, |T| \leq 3\epsilon |V_G|= 6 \epsilon n$.  For $\epsilon\in (0,1/10)$, there are at most $\sum_{k = 0}^{\floor{6 \epsilon n}} \binom{2n}{k} \leq  n \binom{2n}{\floor{ 6 \epsilon n}}$ ways to choose each of $S$ and $T$, and then, crudely, $q^{12\epsilon n}$ ways to assign them spins; further, for $i\in \{0,1\}$, the vertices in $V_G^i\backslash (S\cup T)$ can be coloured in at most $|B_i\cap C_i|^n$ ways, since they must have a colour in $B_i\cap C_i$. Observe now that at least one of the inequalities $|B_0 \cap C_0| \leq  |B_0|-1,|B_1 \cap C_1| \leq  |B_1|-1$ must hold since otherwise $B_0\subseteq C_0, B_1\subseteq C_1$, contradicting the maximality $(B_0,B_1)\in \kmax$. We also have the bounds 
\[\Big(\frac{|B_i|-1}{|B_i|}\Big)^n\leq (1-1/q)^n\leq e^{-n/q}\mbox{ for $i\in \{0,1\}$.}\]
Combining the above, we obtain
\[
\sum_{\sigma \in \Sigma^{B_0, B_1}_{G, H, 3\epsilon} \cap \Sigma^{C_0, C_1}_{G, H, 3\epsilon}} w_{G, H}(\sigma)  \leq n^2\binom{2n}{\floor{6 \epsilon n}}^2 q^{12 \epsilon n} e^{-n/q}|B_0|^n|B_1|^n.
\]
We have $|\kmax| \leq 2^{2q}$ and $\binom{2n}{\floor{6 \epsilon n}} \leq (e/3\epsilon)^{6 \epsilon n}$, hence 
\begin{equation}\label{eq:556g56g6gyg}
\widehat{Z}_{G, H, 3\epsilon}^{\overlap}\leq n^2 2^{4q} (eq/3\epsilon)^{12 \epsilon n}e^{-n/q}\sum_{(B_0, B_1) \in \kmax} |B_0|^n |B_1|^n.
\end{equation}
For $c>0$, the function $f(x)=(c/x)^{x}$ is increasing in the interval $(0,c/e]$, so using that $\epsilon\leq \tfrac{1}{240q\log q}$, we further have that
\begin{equation}\label{eq:v5vt4f45f}
(eq/3\epsilon)^{\epsilon}\leq  (80eq^2\log q)^{1/(240q\log q)}=e^{\tfrac{1+\log(80)+2\log q+\log \log q}{240q\log q}} \leq e^{1/(20q)},
\end{equation}
where in the last inequality we used that $\tfrac{1+\log(80)}{240\log 2}+\tfrac{1}{120}+\frac{1}{240}\leq \frac{1}{20}$.
By considering the set of surjective maps from $V_G^0$ to $B_0$ and $V_G^1$ to $B_1$, for each maximal biclique $(B_0, B_1) \in \kmax$, we can also lower bound $Z_{G, H, \epsilon}$. Lemma 17 of~\cite{galanis2016approximately} (which is a simple corollary of an analogous bound in~\cite{DGGJ}) states that the number of surjective maps from a set of size $m$ to a set of size $k$ is at most $(1 - 2k/m) k^m$. Thus, we have that
\[Z_{G, H, \epsilon}\geq \Big(1-\frac{2q}{n}\Big)^2\sum_{(B_0, B_1) \in \kmax} |B_0|^n |B_1|^n.\] 
This, combined with \eqref{eq:556g56g6gyg}, \eqref{eq:v5vt4f45f} and the choice of $n$, yields \eqref{eq:dblcount}, finishing the proof of Lemma~\ref{lem:zhatepsappxheps}.
\end{proof}
\end{lemma}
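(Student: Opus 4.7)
The plan is to reduce both claims of the lemma to the single bound $Z_{G,H,3\epsilon}^{\overlap}\le e^{-n/(3q)}Z_{G,H,\epsilon}$, and then to prove that bound through a direct combinatorial count. For the reduction, note that $\Sigma_{G,H,\epsilon}\subseteq\Sigma_{G,H,3\epsilon}=\bigcup_{(B_0,B_1)\in\kmax}\Sigma^{B_0,B_1}_{G,H,3\epsilon}$, so $\widehat{Z}_{G,H,3\epsilon}\ge Z_{G,H,3\epsilon}\ge Z_{G,H,\epsilon}$. In the other direction, each configuration is counted at most $|\kmax|\le 2^{2q}$ times in $\widehat{Z}_{G,H,3\epsilon}$, and the ``excess'' over counting each configuration once is supported on $\Sigma^{\overlap}_{G,H,3\epsilon}$, giving $\widehat{Z}_{G,H,3\epsilon}\le Z_{G,H,3\epsilon}+2^{2q}Z^{\overlap}_{G,H,3\epsilon}$. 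Combining this with Lemma~\ref{lem:gs} (which bounds $Z_{G,H,3\epsilon}-Z_{G,H,\epsilon}\le Z_{G,H}-Z_{G,H,\epsilon}\le e^{-n}Z_{G,H,\epsilon}$) shows that both assertions follow once $2^{2q}Z^{\overlap}_{G,H,3\epsilon}\le e^{-n/(3q)}Z_{G,H,\epsilon}$ is established.

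For the overlap upper bound, fix two distinct maximal bicliques $(B_0,B_1),(C_0,C_1)\in\kmax$. The maximality of $(B_0,B_1)$ rules out $B_0\subseteq C_0$ and $B_1\subseteq C_1$ holding simultaneously, so one of $|B_i\cap C_i|\le|B_i|-1$ must hold; I would use the bound $(|B_i\cap C_i|/|B_i|)^n\le(1-1/q)^n\le e^{-n/q}$ to extract a gain in that coordinate. Any $\sigma\in\Sigma^{B_0,B_1}_{G,H,3\epsilon}\cap\Sigma^{C_0,C_1}_{G,H,3\epsilon}$ is determined by (i) the ``exceptional'' vertex sets $S=\bigcup_i V_G^i\setminus\sigma^{-1}(B_i)$ and $T=\bigcup_i V_G^i\setminus\sigma^{-1}(C_i)$, each of size at most $6\epsilon n$, (ii) the spins on $S\cup T$, and (iii) the spins on $V_G^i\setminus(S\cup T)$, which must lie in $B_i\cap C_i$. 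A standard entropy bound $\binom{2n}{\lfloor 6\epsilon n\rfloor}\le(e/3\epsilon)^{6\epsilon n}$ controls the first factor, yielding
\[
\sum_{\sigma\in\Sigma^{B_0,B_1}_{G,H,3\epsilon}\cap\Sigma^{C_0,C_1}_{G,H,3\epsilon}}w_{G,H}(\sigma)\le n^2\binom{2n}{\lfloor 6\epsilon n\rfloor}^2 q^{12\epsilon n}e^{-n/q}|B_0|^n|B_1|^n.
\]

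For the matching lower bound on $Z_{G,H,\epsilon}$, I would use that for every $(B_0,B_1)\in\kmax$ any configuration mapping $V_G^0$ surjectively into $B_0$ and $V_G^1$ surjectively into $B_1$ lies in $\Sigma^{B_0,B_1}_{G,H,\epsilon}$ and has weight exactly~$1$, and apply the surjection lower bound of \cite[Lemma 17]{galanis2016approximately} to get $Z_{G,H,\epsilon}\ge(1-2q/n)^2\sum_{(B_0,B_1)\in\kmax}|B_0|^n|B_1|^n$. Summing the overlap bound over at most $|\kmax|^2\le 2^{4q}$ pairs and dividing by this lower bound reduces everything to checking
\[
n^2\,2^{6q}(eq/3\epsilon)^{12\epsilon n}e^{-n/q}\le e^{-n/(3q)}(1-2q/n)^2.
\]
The hypothesis $\epsilon\le 1/(240q\log q)$ is precisely what makes the function $f(x)=(c/x)^x$ on $(0,c/e]$ small enough to give $(eq/3\epsilon)^{\epsilon}\le e^{1/(20q)}$, so the troublesome $(eq/3\epsilon)^{12\epsilon n}$ factor is absorbed into $e^{3n/(5q)}$, leaving a net exponent of $-2n/(5q)$ that comfortably beats the required $-n/(3q)$ for large $n$. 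The main obstacle is this arithmetic balancing act: the entropy loss from choosing $S,T$ and spinning them arbitrarily, together with the $2^{2q}$ overcount, has to be dominated by the single factor $e^{-n/q}$ coming from biclique maximality, and the logarithmic smallness requirement on $\epsilon$ is exactly calibrated to make this work.
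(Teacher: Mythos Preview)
Your proposal is correct and follows the paper's proof essentially line for line: the same reduction via $\widehat{Z}_{G,H,3\epsilon}\le Z_{G,H,3\epsilon}+2^{2q}Z^{\overlap}_{G,H,3\epsilon}$ combined with Lemma~\ref{lem:gs}, the same combinatorial count of $\Sigma^{B_0,B_1}_{G,H,3\epsilon}\cap\Sigma^{C_0,C_1}_{G,H,3\epsilon}$ via the sets $S,T$, the same maximality argument extracting the $e^{-n/q}$ factor, the same surjection lower bound from \cite[Lemma~17]{galanis2016approximately}, and the same calibration $(eq/3\epsilon)^{\epsilon}\le e^{1/(20q)}$. The only cosmetic difference is that the paper writes the intermediate overlap quantity as $\widehat{Z}_{G,H,3\epsilon}^{\overlap}$ rather than $Z_{G,H,3\epsilon}^{\overlap}$, but these denote the same sum.
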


\section{Using polymer models to estimate the partition function}\label{sec:3f3frr}
In this section, we first define subset polymer models, which will be important in obtaining our approximation algorithms, and then review the algorithmic results of \cite{chen2019fast} in Section~\ref{sec:algopol}. We then define a polymer model for spin systems in Section~\ref{sec:polymermodel}, and obtain approximation/sampling algorithms for it in Section~\ref{sec:sampling}.

\subsection{Subset polymer models}
\label{sec:abspolymodels}

 Our presentation of polymer models follows mostly~\cite{JKP}, but is slightly modified so that a polymer model is a function of both an underlying graph $G$ and a host graph $J_G$. This will be for convenience, since it allows algorithms that operate on polymer models to have access to both the underlying graph and the host graph (since it is possible that, in some polymer models, certain information about the structure of $G$ is lost when constructing $J_G$). 
 Our polymer models are subset polymers as defined by  Gruber and Kunz~\cite{gruber1971general}.

Let $\calG$ be a class of graphs. Given an underlying graph $G \in \calG$ and a set of spins $[q] = \{1,\hdots, q\}$, we construct a host graph $J_G$; in our case, we will set for example $J_G$ to be $G^3$ (see Section~\ref{sec:polymermodel} for more details), but in general other choices are obviously possible. We assign to each vertex $v \in V_{J_G}$ a set of ``ground state'' spins $g_v \subseteq [q]$. A \textit{polymer} is a pair $\gamma = (V_\gamma, \sigma_\gamma)$ consisting of a $J_G$-connected set of vertices $V_\gamma$ and an assignment $\sigma_\gamma : V_\gamma \rightarrow [q]$ such that $\sigma_\gamma(v) \in [q] \setminus g_v$ for all $v \in V_\gamma$. Let $\calP_{G}$ be the set of all polymers.

A \textit{polymer model} for an underlying graph $G$ and host graph $J_G$ is defined by a set of allowed polymers $\calC_G \subseteq \calP_G$, and a weight function $w_G : \calC_G \rightarrow \mathbb{R}_{\geq 0}$. For polymers $\gamma, \gamma' \in \calP_G$, we write $\gamma \sim \gamma'$ to denote that  $\gamma, \gamma'$ are \textit{compatible}, i.e.,  if $d_{J_G}(V_\gamma, V_{\gamma'}) > 1$ where $d_{J_G}(\cdot, \cdot)$ denotes the graph distance in $J_G$ and for $S, T \subseteq V_{J_G}$ we let $d_{J_G}(S, T) = \min_{u \in S, v \in T} d_{J_G}(u, v)$. We define $\Omega_G = \{ \Gamma \subseteq \calC_G \mid \forall \gamma, \gamma' \in \Gamma, \gamma \sim \gamma' \}$ to be the set of all sets of mutually compatible polymers of $\calC_G$; elements of $\Omega_G$ are called polymer configurations. The polymer model induces the partition function
\[
Z_{G} = \sum_{\Gamma \in \Omega_G} \prod_{\gamma \in \Gamma} w_G(\gamma),
\]
where $\prod_{\gamma \in \emptyset} w_G(\gamma) = 1$, and the following Gibbs distribution on $\Omega_G$, defined by 
\[
\mu_{G}(\Gamma) = \frac{\prod_{\gamma \in \Gamma} w_G(\gamma)}{Z_{G}}
\]
for all $\Gamma \in \Omega_G$. We use $(\calC_G, w_G ,J_G)$ to denote the polymer model and $\{(\calC_G, w_G ,J_G) \mid G \in \calG \}$ to denote the family of polymer models corresponding to the class of graphs $\calG$; we say that the family has degree bound $\Delta$ if, for every $G\in \calG$, both $G$ and the host graph $J_G$ have maximum degree at most $\Delta$. 

\subsection{Algorithms for polymer models}\label{sec:algopol}

Given a family of polymer models, Chen et. al. \cite{chen2019fast}, building upon work of \cite{JKP}, give sufficient conditions under which the partition function of the polymer model can be efficiently approximated using Markov chains. We will briefly describe these conditions, as well as the key results from~\cite{chen2019fast} that will be later important for us.

The first condition is known as \textit{computational feasibility} and is defined as follows.
\begin{definition}\cite[Definition 3]{chen2019fast}
Let $\calG$ be a class of graphs. A family of polymer models $\{(\calC_G, w_G, J_G) \mid G \in \calG \}$ is computationally feasible if, for all $G \in \calG$ and all $\gamma \in \calP_G$, we can determine whether $\gamma \in \calC_G$ and, if so, compute $w_G(\gamma)$ in time polynomial in $|V_\gamma|$.\qed
\end{definition}

The second condition is called the \textit{polymer sampling condition} and is defined as follows.
\begin{definition}\cite[Definition 4]{chen2019fast}
Let $q\geq 2,\Delta\geq 3$ be  integers, and  $\calG$ be a class of graphs. A family of polymer models $\{(\calC_G, w_G, J_G) \mid G \in \calG \}$  with $q$ spins and degree bound $\Delta$ satisfies the polymer sampling condition with constant $\tau \geq 5 + 3\log((q - 1)\Delta)$ if $w_G(\gamma) \leq e^{- \tau |V_\gamma|}$
for all $\gamma \in \calC_G$ and all $G \in \calG$.\qed
\end{definition}

The following result from \cite{chen2019fast} asserts that if a family of polymer models satisfies the above two conditions, then there are efficient approximation and sampling algorithms for the partition function and the Gibbs distribution of the polymer model, respectively.

\begin{theorem}[{\cite[Theorems 5 \& 6]{chen2019fast}}]
\label{thm:polymerfpras}
Let $q\geq 2$, $\Delta\geq 3$ be integers, and $\calG$ be a class of graphs. Suppose that $\{(\calC_G, w_G, J_G) \mid G \in \calG \}$ is a family of computationally feasible polymer models with $q$ spins and degree bound $\Delta$ that satisfies the polymer sampling condition.
Then there is a randomised algorithm that takes as input an $n$-vertex graph $G \in \calG$ and an accuracy parameter $\epsilon^* \in (0, 1)$ and  outputs an $\epsilon^*$-approximation to $Z_{G}$ in time $O((n/\epsilon^*)^2 \log^3(n/\epsilon^*))$ with probability at least~$3/4$. Moreover, there is a randomised algorithm that 
takes the same input and provides  an $\epsilon^*$-approximate sample from $\mu_G$ in time $O(n \log (n/\epsilon^*)\log(1/\epsilon^*))$. 
\end{theorem}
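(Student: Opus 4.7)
The plan is to combine a rapidly-mixing Markov chain on $\Omega_G$ with a telescoping (simulated-annealing) reduction from sampling to counting. The two hypotheses play distinct roles: computational feasibility lets each step of the chain be implemented in polylogarithmic time, while the polymer sampling condition simultaneously controls rapid mixing and the variance of importance-sampling estimators for partition-function ratios.

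For the Markov chain, I would use a single-site Glauber-type dynamics: pick a uniformly random vertex $v\in V_{J_G}$, propose to replace the polymer containing $v$ in the current configuration (if any) by one drawn from a suitable local proposal, and accept subject to compatibility with the rest of the configuration. Although $\calC_G$ in principle contains polymers of all sizes, the assumption $w_G(\gamma)\leq e^{-\tau|V_\gamma|}$ with $\tau\geq 5+3\log((q-1)\Delta)$ ensures that only polymers of size $O(\log(n/\epsilon^*))$ contribute non-negligibly. Combined with the standard bound $(e\Delta)^k$ on the number of $J_G$-connected vertex sets of size $k$ containing a fixed vertex, this enumeration --- together with the weight evaluations made available by computational feasibility --- can be done in polylogarithmic time per step. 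Rapid mixing would then follow from a path coupling argument on a weighted Hamming distance between polymer configurations: the $3\log((q-1)\Delta)$ term in $\tau$ absorbs the combined entropy of polymer shapes and of the at most $q^{|V_\gamma|}$ spin labellings, while the additive slack of $5$ provides uniform contraction of coupled distance per sweep, yielding $O(n\log(n/\epsilon^*))$ mixing and the claimed $O(n\log(n/\epsilon^*)\log(1/\epsilon^*))$ sampling time after standard restart amplification.

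For approximating $Z_G$, I would introduce an annealing parameter via rescaled weights $w_G^{(\beta)}(\gamma):=w_G(\gamma)\,e^{-\beta|V_\gamma|}$ with partition function $Z_G^{(\beta)}$, noting that $Z_G^{(0)}=Z_G$ while $Z_G^{(\beta)}\to 1$ as $\beta\to\infty$ (so that $Z_G^{(\beta_0)}-1$ is negligible for $\beta_0=\Theta(\log(n/\epsilon^*))$). Writing $Z_G$ as the telescoping product $Z_G^{(\beta_0)}\prod_{i} Z_G^{(\beta_{i-1})}/Z_G^{(\beta_i)}$ over a carefully chosen schedule $\beta_0>\beta_1>\cdots>\beta_T=0$, each ratio is an expectation under $\mu_G^{(\beta_i)}$ of a bounded random variable whose variance-to-mean-squared ratio is controlled again by the polymer sampling condition. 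Estimating each ratio to multiplicative accuracy $\epsilon^*/\sqrt{T}$ via the sampler above and multiplying then delivers an $\epsilon^*$-approximation to $Z_G$ within the target running time $O((n/\epsilon^*)^2\log^3(n/\epsilon^*))$. I expect the main obstacle to be the contraction step in the path coupling: one must precisely sum the weights of polymers passing through a single vertex that could create or propagate a disagreement, and verify that polymer-shape entropy is absorbed by $e^{-\tau|V_\gamma|}$ with enough slack to guarantee contraction independent of $n$ --- this is exactly where the numerical constants $5$ and $3\log((q-1)\Delta)$ in the polymer sampling condition are calibrated.
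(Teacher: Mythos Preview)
The paper does not prove this theorem at all: it is quoted as a black box from Chen, Galanis, Goldberg, Perkins, Stewart, and Vigoda~\cite{chen2019fast} (their Theorems~5 and~6), and is invoked only via Corollary~\ref{cor:pmerfpras}. There is therefore no proof in the present paper against which to compare your sketch.

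That said, your outline is a reasonable reconstruction of the argument in~\cite{chen2019fast}: a polymer-level Markov chain on $\Omega_G$ whose rapid mixing is established by path coupling, with the exponential decay $w_G(\gamma)\le e^{-\tau|V_\gamma|}$ absorbing the entropy of connected subgraphs and spin labellings; truncation to polymers of size $O(\log(n/\epsilon^*))$ so that each update can be implemented by enumeration in polylogarithmic time using computational feasibility; and a simulated-annealing reduction from sampling to counting. The one place where your description is slightly loose is the update rule: in~\cite{chen2019fast} the chain does not ``replace the polymer containing $v$ by a new one'' but rather, at each step, either removes the polymer through $v$ or inserts a new polymer at $v$ drawn proportionally to its weight (with truncation), and the path-coupling contraction is analysed for this specific add/remove dynamics. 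This distinction matters for the constants, since the contraction calculation in~\cite{chen2019fast} is exactly what calibrates the threshold $\tau\ge 5+3\log((q-1)\Delta)$.
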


\subsection{Polymer model for spin systems}
\label{sec:polymermodel}

In this section we define a polymer model for spin systems that captures the deviations of spin configurations from maximal bicliques. The polymer model that we propose is a generalisation to arbitrary spin systems of a polymer model that was used in~\cite[Section 5]{JKP} in the case of proper colourings.

Let $H \in \mathbb{R}^{q \times q}_{\geq 0}$ be a symmetric matrix and $(B_0, B_1) \in \kmax$ be a maximal biclique of $H$. Let $G \in \gbip$ be a graph, and let $\epsilon \in (0, 1)$. The host graph for the polymer model is $J_G = G^3$, where $G^3$ is the graph defined on $V_G$ with two vertices connected by an edge if the distance between them in $G$ is at most $3$. Note that in the remainder of the paper, we will always use $\partial$ to denote the boundary with respect to $G$ of a vertex set, therefore we omit $G$ from this notation. For  $v \in V_G^i$ with $i\in\{0,1\}$, the set of ground state spins $g_v$ is $B_i$.  Let $\calP_{G, H}^{B_0, B_1}$ denote the set of all polymers, i.e., all pairs $\gamma = (V_\gamma, \sigma_\gamma)$ consisting of a $G^3$-connected set of vertices $V_\gamma$ and an assignment $\sigma_\gamma : V_\gamma \rightarrow [q]$ such that $\sigma_\gamma(v) \in [q] \setminus g_v$ for all $v \in V_\gamma$. We define the set of allowed polymers as 
\begin{equation}
\label{eq:polymers}
\calC_{G, H, \epsilon}^{B_0, B_1} = \left\{ \gamma\in  \calP_{G, H}^{B_0, B_1} : |V_\gamma|\leq \epsilon |V_G| \right\}
\end{equation}
and let $\Omega_{G, H, \epsilon}^{B_0, B_1}$ denote the set of all sets of mutually compatible polymers.  We define the weight of a polymer $\gamma=(V_\gamma,\sigma_\gamma) \in \calC_{G, H, \epsilon}^{B_0, B_1}$ as
\begin{equation}
\label{eq:polyweight}
w_{G, H}^{B_0, B_1}(\gamma)=\frac{\prod_{\{u,v\}\in E_G(V_\gamma)}H_{\sigma_\gamma(u), \sigma_\gamma(v)}\prod_{u \in \partial V_\gamma}F_u}{\prod_{i\in \{0,1\}}|B_i|^{| V_G^i\cap V_\gamma^+|}},
\end{equation}
where 
\[\mbox{for $u\in V_G^i$ with $i\in \{0,1\}$, } F_u:=\sum_{j \in B_i} \prod_{v \in V_\gamma\cap \partial u} H_{j, \sigma_\gamma(v)}.\]
We let $Z_{G, H, \epsilon}^{B_0, B_1}$ and $\mu_{G, H, \epsilon}^{B_0, B_1}$ denote the partition function and the Gibbs distribution of  the polymer model   $(\calC^{B_0, B_1}_{G, H, \epsilon}, w^{B_0, B_1}_{G, H}, J_G)$, as defined in Section~\ref{sec:abspolymodels}.

The next lemma shows the motivation behind the definition of the weight of a polymer. For a polymer configuration $\Gamma \in \Omega_{G, H, \epsilon}^{B_0, B_1}$, let $\cup \Gamma = \bigcup_{\gamma \in \Gamma} V_\gamma$, and $\sigma_\Gamma$ denote the assignment to vertices in $\cup \Gamma$ obtained by combining all of the assignments $\sigma_\gamma$, for $\gamma \in \Gamma$.

\begin{definition}\label{def:r544g5ga}
For $\Gamma \in \Omega_{G, H, \epsilon}^{B_0, B_1}$, define $\Sigma^{B_0, B_1}_{G, H}(\Gamma)$ to be the set of configurations $\tau$ such that $\tau|_{\cup \Gamma}=\sigma_\Gamma$ and which map, for $i\in \{0,1\}$,  $V_G^i \setminus (\cup \Gamma) $ to $B_i$.  \qed
\end{definition}

\begin{lemma}
\label{lem:weightmult}
Let $n=|V_G^0|=|V_G^1|$. For all $\epsilon \in (0, 1)$, and all polymer configurations $\Gamma \in \Omega_{G, H, \epsilon}^{B_0, B_1}$, we have that
\[
|B_0|^n |B_1|^n \prod_{\gamma \in \Gamma} w_{G, H}^{B_0, B_1}(\gamma) = \sum_{ \tau\in \Sigma^{B_0, B_1}_{G, H}( \Gamma)} w_{G, H}(\tau).
\]
\begin{proof}

Let $\gamma,\gamma'$ be two distinct polymers in $\Gamma$. Since the polymers are compatible, they correspond to distinct $G^3$-connected components; in particular, $V_{\gamma}^+\cap V_{\gamma'}^+=\emptyset$. It follows that $|(\cup \Gamma)^+|=\sum_{\gamma\in \Gamma} |V_\gamma^+|$. Hence, by the definition in~\eqref{eq:polyweight}, we have that
\begin{equation}\label{eq:prodwt}
\begin{aligned}
|B_0|^n |B_1|^n \prod_{\gamma \in \Gamma} w^{B_0,B_1}_{G,H}(\gamma) &= |B_0|^n |B_1|^n \prod_{\gamma \in \Gamma} \frac{\prod_{\{u,v\}\in E_G(V_\gamma)}H_{\sigma_\gamma(u), \sigma_\gamma(v)}\prod_{u \in \partial V_\gamma}F_u}{\prod_{i\in \{0,1\}}|B_i|^{| V_G^i\cap V_\gamma^+|}} \\
&= \prod_{i\in \{0,1\}}|B_i|^{|V^i_G\backslash (\cup \Gamma)^+|} \prod_{\{u,v\}\in E_G(\cup\Gamma)}H_{\sigma_\Gamma(u), \sigma_\Gamma(v)}\prod_{u \in \partial (\cup \Gamma)}F_u.
\end{aligned}
\end{equation}
On the other hand, for each $\tau\in \Sigma^{B_0, B_1}_{G, H}(\Gamma)$, we have that 
\[ w_{G, H}(\tau)=\prod_{\{u,v\}\in E_G(\cup \Gamma)}H_{\sigma_\Gamma(u), \sigma_\Gamma(v)}\prod_{u\in \partial (\cup \Gamma)}\prod_{v\in (\cup\Gamma)\cap \partial u}H_{\sigma_\Gamma(v), \tau(u)},\]
i.e., given $\Gamma$ and that $\tau\in \Sigma^{B_0, B_1}_{G, H}(\Gamma)$, the weight of $\tau$ depends only on the assignment of $\partial (\cup \Gamma)$ --- this is because $H_{\tau(u), \tau(v)} = 1$ for all edges $\{u, v\}$ with $u, v \notin \cup \Gamma$. Let $W(\Gamma)$ be the set of configurations $\eta: \partial (\cup \Gamma)\rightarrow B_0\cup B_1$ such that vertices in $\partial (\cup \Gamma)\cap V_G^i$ take a spin in $B_i$, for $i\in \{0,1\}$. For $\eta\in W(\Gamma)$, the number of configurations $\tau$ in $\Sigma^{B_0, B_1}_{G, H}(\Gamma)$ with $\tau|_{\partial(\cup \Gamma)}=\eta$ is $\prod_{i\in \{0,1\}}|B_i|^{|V^i_G\backslash (\cup \Gamma)^+|} $, and 
\[\sum_{\eta\in W(\Gamma)}\prod_{u\in \partial (\cup \Gamma)}\prod_{v\in (\cup\Gamma)\cap \partial u }H_{\sigma_\Gamma(v), \eta(u)}=\prod_{i\in \{0,1\}}\prod_{u\in V_G^i\cap  \partial (\cup \Gamma)}\sum_{\eta:\{u\}\rightarrow B_i}\prod_{v\in (\cup\Gamma)\cap \partial u}H_{\sigma_\Gamma(v), \eta(u)}=\prod_{u \in \partial (\cup \Gamma)}F_u.\] 
It follows that  
\[\sum_{ \tau\in \Sigma^{B_0, B_1}_{G, H}(\Gamma)} w_{G, H}(\tau)=\prod_{i\in \{0,1\}}|B_i|^{|V^i_G\backslash (\cup \Gamma)^+|} \prod_{\{u,v\}\in E_G(V_\Gamma)}H_{\sigma_\Gamma(u), \sigma_\Gamma(v)}\prod_{u \in \partial (\cup \Gamma)}F_u,\]
which combined with \eqref{eq:prodwt} gives the desired equality.
\end{proof}
\end{lemma}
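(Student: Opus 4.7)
The plan is to prove the identity by expanding both sides and matching them term-by-term; the proof is a careful bookkeeping argument that leverages polymer compatibility and the fact that $(B_0,B_1)$ is a biclique. The key structural consequence of compatibility I would establish first is that any two distinct polymers $\gamma\neq \gamma'$ in $\Gamma$ are at $G$-distance at least $4$ (since $J_G=G^3$ and $d_{J_G}(V_\gamma,V_{\gamma'})>1$). In particular, the closed $G$-neighbourhoods $V_\gamma^+$ are pairwise disjoint, so $|(\cup\Gamma)^+|=\sum_{\gamma\in\Gamma}|V_\gamma^+|$, and for every $u\in\partial(\cup\Gamma)$ the set $(\cup\Gamma)\cap\partial u$ coincides with $V_\gamma\cap\partial u$ for a uniquely determined polymer $\gamma\in\Gamma$.

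Next I would expand the LHS. Using the definition of $w_{G,H}^{B_0,B_1}$ in \eqref{eq:polyweight} together with the disjointness above, the factor $|B_0|^n|B_1|^n$ absorbs the denominators and leaves $\prod_{i\in\{0,1\}}|B_i|^{|V_G^i\setminus(\cup\Gamma)^+|}$, multiplied by the product of $H$-weights over internal edges $E_G(\cup\Gamma)$ and by $\prod_{u\in\partial(\cup\Gamma)}F_u$.

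For the RHS, I would split each edge of $G$ into one of three categories with respect to $\cup\Gamma$: (i) both endpoints in $\cup\Gamma$, (ii) exactly one endpoint in $\cup\Gamma$ (the other then lies in $\partial(\cup\Gamma)$), (iii) both endpoints outside $\cup\Gamma$. For every $\tau\in\Sigma^{B_0,B_1}_{G,H}(\Gamma)$, category (i) contributes exactly $H_{\sigma_\Gamma(u),\sigma_\Gamma(v)}$, which is independent of $\tau$. For category (iii), both endpoints are assigned spins from $B_0\cup B_1$ by Definition~\ref{def:r544g5ga}; since $G$ is bipartite one endpoint gets a spin from $B_0$ and the other from $B_1$, and the biclique property forces $H=1$. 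Thus only category-(ii) edges depend nontrivially on $\tau$.

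Finally I would sum over all $\tau\in\Sigma^{B_0,B_1}_{G,H}(\Gamma)$ by factoring the free spins: vertices in $V_G^i\setminus(\cup\Gamma)^+$ are free to take any spin in $B_i$ (they touch no category-(ii) edge), contributing $\prod_i|B_i|^{|V_G^i\setminus(\cup\Gamma)^+|}$; and the sum over the spin assignments of the boundary $\partial(\cup\Gamma)$ factorises across $u\in\partial(\cup\Gamma)$, thanks to the uniqueness observation above, with each $u\in V_G^i\cap\partial(\cup\Gamma)$ contributing $\sum_{j\in B_i}\prod_{v\in V_\gamma\cap\partial u}H_{j,\sigma_\gamma(v)}=F_u$. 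Comparing this with the expanded LHS yields the claimed equality. The main — though still routine — obstacle is the boundary factorisation: one must verify simultaneously that boundary-boundary edges contribute $1$ (using both bipartiteness and the biclique property) and that each boundary vertex's spin sum decouples cleanly into $F_u$ (which is exactly where compatibility is used to ensure $u$ only interacts with a single polymer).
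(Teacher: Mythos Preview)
Your proposal is correct and follows essentially the same approach as the paper: both arguments use compatibility to get disjointness of the $V_\gamma^+$, expand the LHS via \eqref{eq:polyweight}, note that edges with both endpoints outside $\cup\Gamma$ contribute~$1$ by the biclique property, and then factorise the sum over $\tau$ into the free-vertex count and the per-boundary-vertex sums $F_u$. Your explicit observation that each $u\in\partial(\cup\Gamma)$ interacts with a unique polymer (so that $F_u$ is well-defined globally and matches its per-polymer definition) is a point the paper leaves implicit, but the structure of the two arguments is the same.
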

The following quantity combines the partition functions of  the polymer models corresponding to maximal bicliques of $H$; we will use this as our approximation to $Z_{G,H}$.
\begin{definition}\label{def:Zpolymer}
For $\epsilon\in (0,1)$, let
\[
Z_{G, H, \epsilon}^{\text{polymer}} = \sum_{(B_0, B_1) \in \kmax} |B_0|^n |B_1|^n \cdot  Z_{G, H, \epsilon}^{B_0, B_1}.\qedhere
\]
\end{definition}
The following lemma is a minor adaptation of~\cite[Claim 29]{JKP} in our setting, and will be used to bound the aggregate size of polymer configurations.
\begin{lemma}[{\cite[Claim 29]{JKP}}]
\label{lem:smallconfigs}
Let $\epsilon \in (0, 1)$ be such that $\epsilon\geq \lambda/\Delta$. Then, for $G \in \gbip$ with $n=|V^0_G|=|V^1_G|$ sufficiently large, there is no set  $S \subseteq V_G$ with  $|S|>6\epsilon n$ whose $G^3$-connected components, say $S_1, S_2,\hdots, S_k$, satisfy $|S_i| \leq 2\epsilon n$ for $i\in [k]$.
\begin{proof}
\sloppy We will show the result when  $n\geq 1/(2\epsilon^2 \Delta)$. For the sake of contradiction, suppose that such a set $S$ exists with $|S|>6\epsilon n$ and whose $G^3$-connected components  $S_1, S_2,\hdots, S_k$ satisfy $|S_i| \leq 2\epsilon n$ for $i\in [k]$.  Then, we can partition $\{1, 2,\hdots, k\}$ into sets $T_1$ and $T_2$ such that, for $j\in \{1,2\}$, the set $U_j=\cup_{i\in T_j}S_i$ satisfies $|U_j|\geq 2 \epsilon n$. It follows by Corollary~\ref{lem:edgeexp} that $e_G(U_1, U_2) \geq 2\epsilon^2 \Delta n \geq 1$, therefore there must exist $i \in T_1$ and $j \in T_2$ such that $e_G(S_i, S_j) \geq 1$.  This yields a contradiction since two distinct $G^3$-connected components cannot be connected by an edge.
\end{proof}
\end{lemma}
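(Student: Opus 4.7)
The plan is a contradiction argument driven by the edge-expansion bound of Corollary~\ref{lem:edgeexp}. Suppose such an $S$ exists with $G^3$-components $S_1,\dots,S_k$, each of size at most $2\epsilon n$, and $|S| > 6\epsilon n$. The first step is to partition the index set $[k]$ into two groups $T_1,T_2$ so that the unions $U_j := \bigcup_{i \in T_j} S_i$ each have at least $2\epsilon n$ vertices. The next step is to use bipartite edge-expansion to conclude that some edge of $G$ runs between $U_1$ and $U_2$. But distinct $G^3$-connected components of $S$ are at $G$-distance at least~$4$, so in particular no edge of $G$ can join any $S_i \in T_1$ with any $S_j \in T_2$; this is the contradiction.

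For the partition step, a one-line greedy argument works: scan components in arbitrary order and move them into $T_1$ until the accumulated mass first exceeds $2\epsilon n$. Because each individual component contributes at most $2\epsilon n$, at the stopping time we have $|U_1| \leq 4\epsilon n$, and therefore $|U_2| \geq |S| - |U_1| > 6\epsilon n - 4\epsilon n = 2\epsilon n$. This is precisely where the factor~$6$ in the hypothesis is used. For the edge-expansion step, the idea is to feed $(U_1,U_2)$ into Corollary~\ref{lem:edgeexp}: the hypothesis $\epsilon \geq \lambda/\Delta$ gives $\frac{\Delta}{2n}\sqrt{|U_1||U_2|} \geq \Delta\epsilon \geq \lambda$, verifying the corollary's precondition, and the conclusion then yields $e_G(U_1,U_2) \geq \frac{\Delta}{2n}|U_1||U_2| \geq 2\Delta\epsilon^2 n$, which is $\geq 1$ as soon as $n \geq 1/(2\Delta\epsilon^2)$ --- this is exactly the ``sufficiently large~$n$'' condition.

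The one subtle point I anticipate is the bipartite bookkeeping: Corollary~\ref{lem:edgeexp} is stated for pairs $(S_0,S_1)$ with $S_0\subseteq V_G^0$ and $S_1 \subseteq V_G^1$, whereas our $U_1,U_2$ are arbitrary subsets of $V_G=V_G^0\cup V_G^1$. One handles this by splitting
\[
e_G(U_1,U_2) = e_G(U_1\cap V_G^0,\,U_2\cap V_G^1) + e_G(U_1\cap V_G^1,\,U_2\cap V_G^0),
\]
and arranging the greedy partition so that at least one of these two bipartite pairs has enough mass on both sides for the corollary to fire. Given the slack afforded by $|S|>6\epsilon n$ and $|S_i|\leq 2\epsilon n$, this is a routine constant-shuffling exercise, and I would expect it to be absorbed into the ``sufficiently large $n$'' clause. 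Modulo this, the whole argument is a three-line application of Corollary~\ref{lem:edgeexp}.
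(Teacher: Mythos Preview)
Your approach is exactly the paper's: greedy partition into $U_1,U_2$ with $|U_j|\geq 2\epsilon n$, then Corollary~\ref{lem:edgeexp} to force an edge between them, contradicting $G$-distance $\geq 4$ between distinct $G^3$-components. You even supply the greedy argument and the threshold $n\geq 1/(2\Delta\epsilon^2)$ that the paper uses.

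You are right to flag the bipartite bookkeeping --- the paper applies Corollary~\ref{lem:edgeexp} to $(U_1,U_2)$ as though it held for arbitrary subsets of $V_G$, whereas the corollary is stated only for $S_0\subseteq V_G^0$, $S_1\subseteq V_G^1$. Your proposed patch, however, does not quite close the gap: if $S\subseteq V_G^0$ entirely (which the hypotheses do not exclude), then every $U_j\cap V_G^1$ is empty and no ``constant-shuffling'' of the greedy partition produces a pair with mass on opposite sides. In that degenerate case one needs a genuinely different step: use that distinct $G^3$-components are at $G$-distance $\geq 4$, so $\partial U_1\subseteq V_G^1$ is disjoint from (and non-adjacent to) $U_2\subseteq V_G^0$; then apply Lemma~\ref{lem:emlemma} to the opposite-side pair $(\partial U_1,\,U_2)$ to derive $e_G(\partial U_1,U_2)>0$, which is the contradiction. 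So the subtlety you spotted is real, but the fix is via neighbourhoods rather than via the partition.
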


Finally, we have the following result.
\begin{lemma}
\label{lem:pmerpfappx}
Let  $\epsilon \in (0, \tfrac{1}{240q\log q}]$ be such that $\Delta \geq \frac{8q^2\log(q)}{\epsilon^2 \log(1/\delta)}$ and  $\epsilon\geq 2q\tfrac{\lambda}{\Delta}$. Then, for  all $G \in \gbip$ with $n=|V^0_G|=|V^1_G|$ sufficiently large, we have that $Z_{G, H, \epsilon}^{\text{polymer}}$ is an $e^{-n/(4q)}$-approximation to $Z_{G, H, \epsilon}$.
\begin{proof}
Using the definition of the polymer partition function $Z_{G,H,\epsilon}^{B_0,B_1}$ (see Section~\ref{sec:polymermodel}) and Lemma~\ref{lem:weightmult}, we can rewrite $Z_{G, H, \epsilon}^{\text{polymer}}$ from Definition~\ref{def:Zpolymer} as
\begin{equation}\label{eq:zpolymerdbl}
\begin{aligned}
Z_{G, H, \epsilon}^{\text{polymer}}& =  
\sum_{(B_0,B_1)\in \kmax} \sum_{\Gamma \in \Omega_{G,H,\epsilon}^{B_0,B_1}} |B_0|^n |B_1|^n\prod_{\gamma\in \Gamma} w_{G,H}^{B_0,B_1}(\gamma)\\
&=\sum_{(B_0, B_1) \in \kmax} \sum_{\Gamma \in \Omega_{G, H, \epsilon}^{B_0, B_1}} \sum_{\sigma \in \Sigma^{B_0, B_1}_{G, H}( \Gamma)} w_{G, H}(\sigma).
\end{aligned}
\end{equation}
Recall from Definition~\ref{def:r544g5ga} that $\Sigma_{G,H}^{B_0,B_1}(\Gamma)$ is
the set of configurations $\tau$ such that $\tau|_{\cup \Gamma}= \sigma_\Gamma$ and which map, for
$i \in \{0,1\}$, $V_G^i\backslash(\cup \Gamma)$ to $B_i$. Recall also from Definition~\ref{def:rg5g5gge} that $\Sigma_{G, H, \epsilon}^{B_0, B_1}$ is the set of $\sigma \in \Sigma_{G, H}$ for which $\big|\sigma^{-1}(B_0) \cap V_G^0\big| + \big|\sigma^{-1}(B_1) \cap V_G^1\big| \geq (1 - \epsilon)|V_G|$ and that $Z_{G, H, \epsilon} :=  \sum_{\sigma \in \Sigma_{G, H, \epsilon}} w_{G, H}(\sigma)$.
 
We first prove that $Z_{G, H, \epsilon}^{\text{polymer}} \geq Z_{G, H, \epsilon}$. Index the  bicliques in $\kmax$ arbitrarily. For $\tau \in \Sigma_{G, H, \epsilon}$, let $\big(B_{0,\tau},B_{1,\tau}\big)$ be the biclique $(B_0,B_1)\in \kmax$ with the smallest index such that   $\tau \in \Sigma^{B_0, B_1}_{G, H,\epsilon}$. Let $T_i=V_G^i\cap  \tau^{-1}([q]\backslash B_{i,\tau})$ for $i\in \{0,1\}$, so that $|T_0\cup T_1|\leq \epsilon |V_G|$. Let $S_1,\hdots, S_k$ denote the $G^3$-connected components of $T_0\cup T_1$. Consider the polymer configuration $\Gamma_\tau$ which is the union of the polymers $(S_1,\tau|_{S_1}),\hdots, (S_k,\tau|_{S_k})$. Note that the map $\tau\mapsto (B_{0,\tau}, B_{1,\tau}, \Gamma_\tau)$ is injective. Moreover, we have that $\Gamma_\tau\in \Omega_{G, H, \epsilon}^{B_{0,\tau}, B_{1,\tau}}$ and $\tau\in \Sigma^{B_{0,\tau}, B_{1,\tau}}_{G, H}(\Gamma_\tau)$, so from \eqref{eq:zpolymerdbl} we obtain that $Z_{G, H, \epsilon}^{\text{polymer}} \geq Z_{G, H, \epsilon}$.

We next prove that $Z_{G, H, \epsilon}^{\text{polymer}} \leq (1 + e^{-n/(4q)}) Z_{G, H, \epsilon}$. Recall from Definition~\ref{def:zhateps} that 
\begin{equation*}
\widehat{Z}_{G, H, 3\epsilon} = \sum_{(B_0, B_1) \in \kmax} \sum_{\sigma \in \Sigma_{G, H, 3\epsilon}^{B_0, B_1}} w_{G, H}(\sigma).
\end{equation*}
Fix any $(B_0,B_1)\in \kmax$. For $\Gamma \in \Omega_{G, H, \epsilon}^{B_0, B_1}$, we have from Lemma~\ref{lem:smallconfigs} that $|\cup \Gamma|\leq 6\epsilon n= 3\epsilon |V_G|$. Therefore, for $\sigma \in \Sigma^{B_0, B_1}_{G, H}(\Gamma)$ as in Definition~\ref{def:r544g5ga}, we have that $\big|\sigma^{-1}(B_0) \cap V_G^0\big| + \big|\sigma^{-1}(B_1) \cap V_G^1\big| \geq (1 - 3\epsilon)|V_G|$ and hence $\sigma\in \Sigma^{B_0, B_1}_{G, H,3\epsilon}$ (cf. Definition~\ref{def:rg5g5gge}). Note that the map $(\Gamma,\sigma)\mapsto \sigma$ is injective (since $\Gamma$ can be recovered from $\sigma$ using the biclique $(B_0,B_1)$), yielding that
\begin{equation*}
\sum_{\Gamma \in \Omega_{G, H, \epsilon}^{B_0, B_1}} \sum_{\sigma \in \Sigma^{B_0, B_1}_{G, H}(\Gamma)} w_{G, H}(\sigma)\leq \sum_{\sigma \in \Sigma_{G, H, 3\epsilon}^{B_0, B_1}} w_{G, H}(\sigma),
\end{equation*}
and hence $Z_{G, H, \epsilon}^{\text{polymer}} \leq \widehat{Z}_{G, H, 3 \epsilon}$. By Lemma~\ref{lem:zhatepsappxheps}, we obtain that $Z_{G, H, \epsilon}^{\text{polymer}} \leq (1 + e^{-n/(4q)}) Z_{G, H, \epsilon}$.

This finishes the proof of Lemma~\ref{lem:pmerpfappx}.
\end{proof}
\end{lemma}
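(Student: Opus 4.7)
The plan is to rewrite $Z_{G,H,\epsilon}^{\text{polymer}}$ using Lemma~\ref{lem:weightmult}, sandwich it between $Z_{G,H,\epsilon}$ and $\widehat{Z}_{G,H,3\epsilon}$, and then invoke Lemma~\ref{lem:zhatepsappxheps} to close the gap. Concretely, by unfolding the definition of $Z_{G,H,\epsilon}^{B_0,B_1}$ as a sum over polymer configurations in $\Omega_{G,H,\epsilon}^{B_0,B_1}$ and applying Lemma~\ref{lem:weightmult}, I would obtain
\[
Z_{G,H,\epsilon}^{\text{polymer}} \;=\; \sum_{(B_0,B_1)\in\kmax}\sum_{\Gamma\in\Omega_{G,H,\epsilon}^{B_0,B_1}}\sum_{\sigma\in\Sigma_{G,H}^{B_0,B_1}(\Gamma)} w_{G,H}(\sigma).
\]
The two directions of the approximation will come from comparing this triple sum with $Z_{G,H,\epsilon}$ and with $\widehat{Z}_{G,H,3\epsilon}$.

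For the lower bound $Z_{G,H,\epsilon}^{\text{polymer}} \geq Z_{G,H,\epsilon}$, I would fix an arbitrary total order on $\kmax$ and, for each $\sigma\in\Sigma_{G,H,\epsilon}$, assign the minimal biclique $(B_0,B_1)\in\kmax$ with $\sigma\in\Sigma_{G,H,\epsilon}^{B_0,B_1}$. Letting $T$ be the set of vertices whose spin under $\sigma$ is not in the ground-state set of the chosen biclique, the $G^3$-connected components of $T$ together with the restrictions of $\sigma$ form a polymer configuration $\Gamma_\sigma\in\Omega_{G,H,\epsilon}^{B_0,B_1}$ (the size bound $|T|\leq \epsilon|V_G|$ follows from the defining inequality of $\Sigma_{G,H,\epsilon}^{B_0,B_1}$, and each component has size at most $|T|$). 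The map $\sigma\mapsto(B_0,B_1,\Gamma_\sigma)$ is injective because $\sigma\in\Sigma_{G,H}^{B_0,B_1}(\Gamma_\sigma)$ and so $\sigma$ appears in the triple sum above.

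For the upper bound, fix $(B_0,B_1)\in\kmax$ and $\Gamma\in\Omega_{G,H,\epsilon}^{B_0,B_1}$. Since every polymer $\gamma\in\Gamma$ satisfies $|V_\gamma|\leq\epsilon|V_G|=2\epsilon n$ and the $V_\gamma$'s are exactly the $G^3$-connected components of $\cup\Gamma$, Lemma~\ref{lem:smallconfigs} (whose hypothesis $\epsilon\geq\lambda/\Delta$ follows from $\epsilon\geq 2q\lambda/\Delta$) yields $|\cup\Gamma|\leq 6\epsilon n = 3\epsilon|V_G|$. Therefore any $\sigma\in\Sigma_{G,H}^{B_0,B_1}(\Gamma)$ lies in $\Sigma_{G,H,3\epsilon}^{B_0,B_1}$, and since $\Gamma$ is recoverable from $\sigma$ and $(B_0,B_1)$, the map $(B_0,B_1,\Gamma,\sigma)\mapsto((B_0,B_1),\sigma)$ is injective, giving $Z_{G,H,\epsilon}^{\text{polymer}}\leq\widehat{Z}_{G,H,3\epsilon}$. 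Combining with Lemma~\ref{lem:zhatepsappxheps} (whose hypotheses match those given here) yields $Z_{G,H,\epsilon}^{\text{polymer}}\leq(1+e^{-n/(4q)})\,Z_{G,H,\epsilon}$, which together with the lower bound completes the proof. The main subtlety I expect is the bookkeeping in the lower bound: ensuring that the biclique selection rule is deterministic so that $\sigma\mapsto(B_0,B_1,\Gamma_\sigma)$ is genuinely injective (since a single $\sigma$ might belong to several $\Sigma_{G,H,\epsilon}^{B_0,B_1}$), and verifying that the resulting $\Gamma_\sigma$ lies in $\Omega_{G,H,\epsilon}^{B_0,B_1}$.
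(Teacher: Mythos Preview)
Your proposal is correct and follows essentially the same approach as the paper: rewrite $Z_{G,H,\epsilon}^{\text{polymer}}$ via Lemma~\ref{lem:weightmult} as a triple sum, obtain the lower bound by associating to each $\sigma\in\Sigma_{G,H,\epsilon}$ a canonically chosen biclique and polymer configuration, and obtain the upper bound by using Lemma~\ref{lem:smallconfigs} to show $Z_{G,H,\epsilon}^{\text{polymer}}\leq\widehat{Z}_{G,H,3\epsilon}$ before invoking Lemma~\ref{lem:zhatepsappxheps}. The details and the subtleties you flag (deterministic biclique selection, validity of $\Gamma_\sigma$) are exactly those in the paper's argument.
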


\subsection{Sampling from the polymer model}
\label{sec:sampling}

Let  $q\geq 2,\Delta\geq 3$ be integers and $\delta\in (0,1), \lambda\in (0,\Delta)$ be reals. Let $H \in \mathbb{R}^{q \times q}_{\geq 0}$ be a symmetric $\delta$-matrix and $(B_0, B_1) \in \kmax$. For $\epsilon \in (0, 1)$, we now show that the family of polymer models $\{ (\calC^{B_0, B_1}_{G, H, \epsilon}, w^{B_0, B_1}_{G, H}, J_G) \mid G \in \gbip \}$ which was defined in the previous subsection is computationally feasible and satisfies the polymer sampling condition. 

\begin{lemma}\label{lem:verify}
Let $\epsilon \in (0, 1)$ be such that $\epsilon \geq \lambda^2/\Delta^2$ and $\epsilon \leq \frac{1-\delta}{40q\log (q \Delta)}$. The family of polymer models $\{ (\calC^{B_0, B_1}_{G, H, \epsilon}, w^{B_0, B_1}_{G, H}, J_G) \mid G \in \gbip \}$ is computationally feasible and satisfies the polymer sampling condition with constant $\tau \geq 5 + 3\log((q - 1)\Delta^3)$.
\begin{proof}
Consider an arbitrary polymer $\gamma=(V_\gamma,\sigma_\gamma) \in \calP_{G, H}^{B_0, B_1}$.

We  first verify computational feasibility by showing that we can determine whether $\gamma \in \calC_{G, H, \epsilon}^{B_0, B_1}$ and, if so, compute its weight $w_{G, H}^{B_0, B_1}(\gamma)$ in time $O(|V_\gamma|)$. Indeed, we have that  $\gamma \in \calC_{G, H, \epsilon}^{B_0, B_1}$ iff $|V_\gamma| \leq 2 \epsilon n$ and $\sigma_\gamma$ maps $V_\gamma \cap V_G^i$ to $[q] \setminus B_i$ for each $i \in \{0, 1\}$, which can be clearly checked in time $O(|V_\gamma|)$. To compute  $w_{G, H}^{B_0, B_1}(\gamma)$, recall from~\eqref{eq:polyweight} that 
\begin{equation*}\tag{\ref{eq:polyweight}}
w_{G, H}^{B_0, B_1}(\gamma)=\frac{\prod_{\{u,v\}\in E_G(V_\gamma)}H_{\sigma_\gamma(u), \sigma_\gamma(v)}\prod_{u \in \partial V_\gamma}F_u}{\prod_{i\in \{0,1\}}|B_i|^{| V_G^i\cap V_\gamma^+|}},
\end{equation*}
where for a vertex $u\in V_G^i \cap \partial V_\gamma$ with $i\in \{0,1\}$, $F_u:=\sum_{j \in B_i} \prod_{v \in \partial u \cap V_\gamma} H_{j, \sigma_\gamma(v)}$. It follows that the r.h.s. in \eqref{eq:polyweight} can be computed in $O(|V_\gamma|)$ time, using that $|\partial V_\gamma| \leq \Delta |V_\gamma|$.

We next verify the polymer sampling condition. From \eqref{eq:polyweight}, using that the entries of $H$ are at most~1, we have the bound
\begin{equation}\label{eq:4tf4ftf4ftf}
w_{G, H}^{B_0, B_1}(\gamma)\leq \frac{\prod_{u \in \partial V_\gamma \cap V_G}F_u}{\prod_{i\in \{0,1\}}|B_i|^{| V_G^i\cap V_\gamma^+|}}.
\end{equation}
Let us now consider the factor $F_u$ for $u\in \partial V_\gamma \cap V_G^i$ with $i \in \{0, 1\}$. Let $v$ be a neighbour of $u$ in $V_\gamma \cap V_G^{i\oplus 1}$; such $v$ exists since $u\in \partial V_\gamma \cap V_G^i$. Then, there exists $j\in B_i$ such that $H_{j, \sigma_\gamma(v)}\leq \delta$; otherwise $(B_i, B_{i\oplus1}\cup \{\sigma_\gamma(v)\})$ would also be a biclique of $H$, contradicting the maximality of $(B_0,B_1)$ (since $\sigma_\gamma(v)\notin B_{i\oplus 1}$). It follows that $F_u\leq |B_i|-1+\delta$. Using this in \eqref{eq:4tf4ftf4ftf}, we get that
\begin{equation}\label{eq:rgbtbbtyrtrr355}
w_{G, H}^{B_0, B_1}(\gamma)\leq \frac{\prod_{i\in\{0,1\}}(|B_i| - 1 + \delta)^{|\partial V_\gamma \cap V_G^i|}}{\prod_{i\in \{0,1\}}|B_i|^{| V_G^i\cap V_\gamma^+|}}\leq \Big( 1 - \frac{1 - \delta}{q} \Big)^{|\partial V_\gamma|}\leq e^{-|\partial V_\gamma| \big( \tfrac{1 - \delta}{ q} \big)},
\end{equation}
where in the second to last inequality we used that $|V_\gamma^+\cap V_G^i|\geq |\partial V_\gamma \cap V_G^i|$ for $i\in\{0,1\}$. 

We next lower bound $|\partial V_\gamma|$ in terms of $|V_\gamma|$. For $i\in \{0,1\}$, let $\rho_i=|V_\gamma\cap V_G^i|/n$ and $\rho=|V_\gamma|/n$ . Applying Lemma~\ref{lem:vtxexpansion} to the set $V_\gamma\cap V_G^i$, we have that
\[
|\partial (V_\gamma\cap V_G^i)| \geq \frac{|V_\gamma\cap V_G^i|}{\rho_i + \tfrac{\lambda^2}{\Delta^2}(1 - \rho_i)} \geq \frac{|V_\gamma\cap V_G^i|}{\rho_i+\tfrac{\lambda^2}{\Delta^2}}.
\]
Now observe that \[\sum_{i\in\{0,1\}}\rho_i |V_\gamma\cap V_G^i|=\tfrac{1}{n}\sum_{i\in\{0,1\}} |V_\gamma\cap V_G^i|^2\leq \tfrac{1}{n}|V_\gamma|^2 =\rho |V_\gamma|\]
and hence, using the inequality $\tfrac{a}{x}+\tfrac{b}{y}\geq \tfrac{(a+b)^2}{ax+by}$ which holds for all $a,b,x,y\geq 0$, we obtain
\[ \sum_{i\in \{0,1\}}|\partial (V_\gamma\cap V_G^i)|\geq \sum_{i\in \{0,1\}}\frac{|V_\gamma\cap V_G^i|}{\rho_i+\tfrac{\lambda^2}{\Delta^2}}\geq \frac{|V_\gamma|}{\rho+\tfrac{\lambda^2}{\Delta^2}}\geq \frac{|V_\gamma|}{3\epsilon},\]
using that $\rho\leq 2\epsilon$ and $\epsilon \geq \tfrac{\lambda^2}{\Delta^2}$. Since $\epsilon\in (0,1/20)$, we have that $|\partial V_\gamma|\geq \frac{|V_\gamma|}{3\epsilon}-|V_\gamma|\geq \frac{|V_\gamma|}{4\epsilon}$. Plugging this into \eqref{eq:rgbtbbtyrtrr355}, we obtain
\[
w_{G, H}^{B_0, B_1}(\gamma) \leq e^{-| V_\gamma| \big( \tfrac{1 - \delta}{4 \epsilon q} \big)} \leq e^{-\tau |V_\gamma|},
\]
where $\tau :=\frac{1-\delta}{4\epsilon q}\geq 10\log (q \Delta)\geq 5 + 3\log((q-1)\Delta^3)$ using that $\epsilon \leq \frac{1-\delta}{40q\log (q \Delta)}$ and  $q\geq 2, \Delta\geq 3$. This finishes the proof of Lemma~\ref{lem:verify}, after observing that the degree bound for the family $\{ (\calC^{B_0, B_1}_{G, H, \epsilon}, w^{B_0, B_1}_{G, H}, J_G) \mid G \in \gbip \}$ is $\Delta^3$.  
\end{proof}
\end{lemma}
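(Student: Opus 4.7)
The plan is to verify the two required properties of the polymer family separately, with most of the effort going into the polymer sampling condition.

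For computational feasibility, the membership test $\gamma \in \calC^{B_0,B_1}_{G,H,\epsilon}$ only requires checking $|V_\gamma| \leq \epsilon |V_G|$ and that $\sigma_\gamma$ avoids the forbidden ground state spins vertex by vertex, both immediate in time $O(|V_\gamma|)$. The weight $w^{B_0,B_1}_{G,H}(\gamma)$ given by formula~\eqref{eq:polyweight} decomposes into a product over $E_G(V_\gamma)$, a product of local sums $F_u$ indexed by $u\in\partial V_\gamma$, and a normalisation depending on $|V_\gamma^+|$; each factor is $O(\Delta)$-local and $\Delta$ is a constant, so the total cost is linear in $|V_\gamma|$.

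For the polymer sampling condition I would first exploit maximality of $(B_0,B_1)$ to bound each $F_u$. If $u\in\partial V_\gamma\cap V_G^i$, pick a neighbour $v\in V_\gamma\cap V_G^{i\oplus 1}$, whose spin $\sigma_\gamma(v)$ lies outside $B_{i\oplus 1}$; maximality of $(B_0,B_1)$ forces some $j\in B_i$ with $H_{j,\sigma_\gamma(v)}\leq \delta$, so $F_u \leq |B_i|-1+\delta$. Bounding the edge factors in~\eqref{eq:polyweight} trivially by $1$ and absorbing the normalisation against $|\partial V_\gamma|$ via $|V_\gamma^+\cap V_G^i|\geq |\partial V_\gamma\cap V_G^i|$, one gets
\[
w^{B_0,B_1}_{G,H}(\gamma)\leq \Big(1-\tfrac{1-\delta}{q}\Big)^{|\partial V_\gamma|} \leq \exp\!\Big(-\tfrac{(1-\delta)|\partial V_\gamma|}{q}\Big).
\]

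The main obstacle is converting this into an exponential bound in $|V_\gamma|$, since the sampling condition is phrased in terms of $|V_\gamma|$. I would invoke the spectral vertex expansion bound (Lemma~\ref{lem:vtxexpansion}) on each side of the bipartition separately. Writing $\rho_i=|V_\gamma\cap V_G^i|/n$ and $\rho=|V_\gamma|/n$, Lemma~\ref{lem:vtxexpansion} yields $|\partial(V_\gamma\cap V_G^i)|\geq |V_\gamma\cap V_G^i|/(\rho_i+\lambda^2/\Delta^2)$. Combining the two sides via the elementary inequality $a/x+b/y\geq (a+b)^2/(ax+by)$ together with $\sum_i \rho_i|V_\gamma\cap V_G^i|\leq \rho|V_\gamma|$ gives the total boundary bound $|\partial V_\gamma|+|V_\gamma|\geq |V_\gamma|/(\rho+\lambda^2/\Delta^2)$. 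Since $|V_\gamma|\leq 2\epsilon n$ so $\rho\leq 2\epsilon$, and since $\epsilon\geq \lambda^2/\Delta^2$ by hypothesis, the right-hand side is at least $|V_\gamma|/(3\epsilon)$, and subtracting $|V_\gamma|$ leaves $|\partial V_\gamma|\geq |V_\gamma|/(4\epsilon)$ for small $\epsilon$.

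Substituting back produces $w^{B_0,B_1}_{G,H}(\gamma) \leq \exp\!\bigl(-\tfrac{(1-\delta)}{4\epsilon q}|V_\gamma|\bigr)$. The hypothesis $\epsilon\leq (1-\delta)/(40q\log(q\Delta))$ then forces the exponent constant to be at least $10\log(q\Delta)$, which exceeds $5+3\log((q-1)\Delta^3)$ for $q\geq 2,\Delta\geq 3$. Finally, the degree bound for the family is $\Delta^3$ because the host graph is $J_G=G^3$, matching the statement.
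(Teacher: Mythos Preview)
Your proposal is correct and follows essentially the same route as the paper: the same $F_u\le |B_i|-1+\delta$ bound from maximality of $(B_0,B_1)$, the same reduction to $w\le e^{-(1-\delta)|\partial V_\gamma|/q}$, the same side-by-side application of Lemma~\ref{lem:vtxexpansion} combined via $a/x+b/y\ge (a+b)^2/(ax+by)$ to obtain $|\partial V_\gamma|\ge |V_\gamma|/(4\epsilon)$, and the same final arithmetic giving $\tau\ge 10\log(q\Delta)\ge 5+3\log((q-1)\Delta^3)$.
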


Finally, we can apply Theorem~\ref{thm:polymerfpras}, which gives us an efficient algorithm for approximating the partition function of the polymer model.

\begin{corollary}
\label{cor:pmerfpras}
Let $q \geq 2$ and $\delta \in (0, 1)$. Let $H\in \mathbb{R}^{q \times q}_{\geq 0}$ be a symmetric $\delta$-matrix and let $(B_0, B_1) \in \kmax$. Let $\epsilon \in (0, 1)$ be such that $\epsilon \geq \lambda^2/\Delta^2$ and $\epsilon \leq \frac{1-\delta}{40q\log (q \Delta)}$. 
Then there is a randomised algorithm that takes as input an $n$-vertex graph $G \in \gbip$ and an accuracy parameter $\epsilon^* \in (0, 1)$ and  outputs an $\epsilon^*$-approximation to $Z_{G,H,\epsilon}^{B_0,B_1}$ in time $O((n/\epsilon^*)^2 \log^3(n/\epsilon^*))$ with probability at least~$3/4$. Moreover, there is a randomised algorithm that 
takes the same input and  outputs  an $\epsilon^*$-approximate sample from $\mu_{G,H,\epsilon}^{B_0,B_1}$ in time $O(n \log (n/\epsilon^*)\log(1/\epsilon^*))$. 
\end{corollary}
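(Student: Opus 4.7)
\bigskip

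\noindent\textbf{Proof plan for Corollary~\ref{cor:pmerfpras}.} The plan is to apply Theorem~\ref{thm:polymerfpras} directly to the family of polymer models $\{(\calC^{B_0, B_1}_{G, H, \epsilon}, w^{B_0, B_1}_{G, H}, J_G) \mid G \in \gbip\}$ defined in Section~\ref{sec:polymermodel}. The theorem requires three ingredients: the family must have bounded degree, it must be computationally feasible, and it must satisfy the polymer sampling condition. All three are either immediate from the construction or have already been established in Lemma~\ref{lem:verify}, so the proof is essentially a one-line citation once we have lined up the hypotheses correctly.

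First, I would observe that since the host graph is $J_G = G^3$ and $G \in \gbip$ has maximum degree $\Delta$, the maximum degree of $J_G$ is at most $\Delta^3$. Hence the family has degree bound $\Delta^3$ in the sense of Section~\ref{sec:abspolymodels}, and the number of spins is $q$. Next, the hypotheses on $\epsilon$ in the statement of the corollary ($\epsilon \geq \lambda^2/\Delta^2$ and $\epsilon \leq \tfrac{1-\delta}{40 q \log(q\Delta)}$) are exactly those required to invoke Lemma~\ref{lem:verify}. That lemma directly gives computational feasibility and establishes the polymer sampling condition with constant $\tau \geq 5 + 3\log((q-1)\Delta^3)$, which is the correct threshold for a family whose degree bound is $\Delta^3$.

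With these verifications in hand, I would then apply Theorem~\ref{thm:polymerfpras} (with the degree bound $\Delta^3$ in place of $\Delta$, noting that the bound on $n$ in the running time is independent of the degree). The first part of the theorem yields the randomised $\epsilon^*$-approximation to $Z_{G,H,\epsilon}^{B_0,B_1}$ in time $O((n/\epsilon^*)^2 \log^3(n/\epsilon^*))$ with success probability at least $3/4$, and the second part yields the $\epsilon^*$-approximate sampler from $\mu_{G,H,\epsilon}^{B_0,B_1}$ in time $O(n\log(n/\epsilon^*)\log(1/\epsilon^*))$. Both running-time bounds match the statement of the corollary exactly.

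There is no real obstacle here: the content has been packaged into Lemma~\ref{lem:verify} and Theorem~\ref{thm:polymerfpras}. The only minor point to check is that the degree bound from $J_G = G^3$ is properly reflected in the $\tau$-threshold (which is why Lemma~\ref{lem:verify} verifies the condition with constant $\tau \geq 5 + 3\log((q-1)\Delta^3)$ rather than $\tau \geq 5 + 3\log((q-1)\Delta)$); this mismatch is already anticipated in the statement of Lemma~\ref{lem:verify}, so nothing further is needed.
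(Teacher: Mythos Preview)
Your proposal is correct and matches the paper's approach exactly: the paper does not even give a formal proof of this corollary, simply remarking before its statement that ``we can apply Theorem~\ref{thm:polymerfpras}'' once Lemma~\ref{lem:verify} has verified computational feasibility and the polymer sampling condition (with degree bound $\Delta^3$). Your write-up spells out precisely this citation chain, including the point about the $\Delta^3$ degree bound matching the $\tau$-threshold in Lemma~\ref{lem:verify}.
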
 

\section{Proof of Theorem~\ref{thm:ssfpras}}\label{sec:final}

In this section, we combine the results of Sections~\ref{sec:ground} and~\ref{sec:3f3frr} to prove Theorem~\ref{thm:ssfpras}.

\ssfpras*
\begin{proof}
As input to the FPRAS, we are given a graph $G \in \gbip$ and an accuracy parameter $\epsilon^* \in (0, 1)$. We may assume that $n=|V^0_G|=|V^1_G|$ is sufficiently large, otherwise we can compute $Z_{G, H}$ exactly in constant time, by brute force. Similarly, we may assume that $\epsilon^* \geq 9 e^{-n/(4q)}$, otherwise  we can compute $Z_{G, H}$ exactly in $O(n q^{2n}) = \text{poly}(1/\epsilon^*)$ time, by brute force. 

Let $\epsilon = \tfrac{1-\delta}{50q\log(q\Delta)}$ and observe that, using the lower bounds on $\tfrac{\Delta}{\lambda}$ and $\Delta$, we have that
\[\epsilon\leq \frac{1}{240q\log q}, \quad \epsilon \geq 2q \frac{\lambda}{\Delta}\geq \frac{\lambda^2}{\Delta^2}, \quad\epsilon^2 \geq \frac{8q^2\log q}{\Delta \log(1/\delta)},\]
where the first inequality follows from $\Delta\geq q^4$, the second from rearranging  $\tfrac{\Delta}{\lambda}\geq 2q/\epsilon$ (using the lower bound on $\tfrac{\Delta}{\lambda}$), and the last inequality from using that $\log(1/\delta)\geq 1-\delta$ for all $\delta\in (0,1)$. In particular, the assumption of Lemmas~\ref{lem:gs}, \ref{lem:pmerpfappx} and Corollary~\ref{cor:pmerfpras} are satisfied.

By Corollary~\ref{cor:pmerfpras}, for an arbitrary biclique $(B_0,B_1)\in \kmax$, we  can obtain an $(\epsilon^*/8)$-approximation to $Z^{B_0, B_1}_{G, H, \epsilon}$ in $O((n/\epsilon^*)^2 \log^3(n/\epsilon^*))$ time, with probability at least $3/4$. Taking the median of $O(\log (1/\epsilon^*))$ runs of this algorithm, we therefore obtain $\widehat{Z}^{B_0, B_1}_{G, H, \epsilon}$ which is an $(\epsilon^*/8)$-approximation to $Z^{B_0, B_1}_{G, H, \epsilon}$ with probability at least $1 - \epsilon^*/(16|\kmax|)$. By a union bound over the bicliques in $\kmax$, it follows that
\[
\widehat{Z}_{G, H, \epsilon}^{\text{polymer}} := \sum_{(B_0, B_1) \in \kmax} |B_0|^n |B_1|^n \cdot  \widehat{Z}^{B_0, B_1}_{G, H, \epsilon},
\]
is an $(\epsilon^*/8)$-approximation to $Z_{G, H, \epsilon}^{\text{polymer}}$ (cf. Definition~\ref{def:Zpolymer}) with probability at least $1-\epsilon^*/16$. By Lemma~\ref{lem:pmerpfappx}, $Z_{G, H, \epsilon}^{\text{polymer}}$ is an $(\epsilon^*/8)$-approximation to $Z_{G, H, \epsilon}$ which, by Lemma~\ref{lem:gs}, is an $(\epsilon^*/8)$-approximation to $Z_{G, H}$. It therefore follows that $\widehat{Z}_{G, H, \epsilon}^{\text{polymer}}$ is a $(3\epsilon^*/8)$-approximation, and hence an $\epsilon^*$-approximation, to $Z_{G, H}$ with probability at least $1-\epsilon^*/16$. The total run-time of the algorithm is therefore $O((n/\epsilon^*)^2 \log^4(n/\epsilon^*))$.

For the sampling algorithm, we assume again that $\epsilon^* \geq  9e^{-n/(4q)}$.  We  first sample a biclique $\widehat{\Bb}=(B_0,B_1)\in \kmax$ with probability $\tfrac{|B_0|^n |B_1|^n\widehat{Z}^{B_0, B_1}_{G, H, \epsilon}}{\widehat{Z}_{G, H, \epsilon}^{\text{polymer}}}$, where $\widehat{Z}^{B_0, B_1}_{G, H, \epsilon}$  and $\widehat{Z}_{G, H, \epsilon}^{\text{polymer}}$ are as before. Then, using Corollary~\ref{cor:pmerfpras}, we sample a polymer configuration $\widehat{\Gammab}$ whose distribution is at distance at most $\epsilon^*/6$ from $\mu^{B_0,B_1}_{G,H,\epsilon}$.  We output $\hat{\sigmab}=\texttt{Spin}_{\widehat{\Bb}}(\widehat{\Gammab})$, where for a biclique $B=(B_0,B_1)$ and a polymer configuration $\Gamma$, $\texttt{Spin}_{B}(\Gamma)$ is a random configuration $\tau$ obtained  as follows:
\begin{itemize}
\item For every vertex $u\in \cup \Gamma$, we set $\tau(u)=\sigma_\Gamma(u)$.
\item For $u\in V_G^i\backslash (\cup \Gamma)^+$ with $i\in \{0,1\}$, we assign a random spin from $B_i$ uniformly at random.
\item For $u \in \partial(\cup \Gamma) \cap V_G^i$ with $i\in \{0,1\}$, for $j\in B_i$ we set $\tau(u)=j$ with probability $\frac{1}{F_u}\prod_{v \in \partial u \cap (\cup \Gamma)} H_{j, \sigma_\Gamma(v)}$ where $F_u:=\sum_{j \in B_i} \prod_{v \in \partial u  \cap (\cup \Gamma)} H_{j, \sigma_{\Gamma}(v)}$.
\end{itemize}
We claim that $\hat{\sigmab}$ is an $\epsilon^*$-approximate sample from the Gibbs distribution $\mu_{G,H}$. To prove this, let $\Bb,\Gammab, \sigmab$ be the analogues of $\wBb,\widehat{\Gammab}, \hat{\sigmab}$, respectively, when there is no error, more precisely:
\begin{enumerate}
\item $\Bb=(B_0,B_1)$ with probability $|B_0|^n |B_1|^n\tfrac{ Z^{B_0, B_1}_{G, H, \epsilon}}{Z_{G, H, \epsilon}^{\text{polymer}}}$.
\item Conditioned on  $\Bb=(B_0,B_1)$,  $\Gammab\sim \mu^{B_0,B_1}_{G,H,\epsilon}$ and $\sigmab=\texttt{Spin}_{\Bb}(\Gammab)$. 
\end{enumerate} 
With probability $1-\epsilon^*/16$, we have that, for all bicliques $(B_0,B_1)\in \kmax$, $\widehat{Z}^{B_0, B_1}_{G, H, \epsilon}$  and $\widehat{Z}_{G, H, \epsilon}^{\text{polymer}}$ are $(\epsilon^*/8)$-approximations to $\widehat{Z}^{B_0, B_1}_{G, H, \epsilon}$  and $\widehat{Z}_{G, H, \epsilon}^{\text{polymer}}$ respectively, and conditioned on this the  total variation distance between the distributions of $\widehat{\Bb}$ and $\Bb$ is at most $\frac{1+\epsilon^*/8}{1-\epsilon^*/8}-1\leq 3\epsilon^*/8$. It follows that the total variation distance between the distributions of $\widehat{\Bb}$ and $\Bb$ is at most $3\epsilon^*/8+\epsilon^*/16\leq\epsilon^*/2$, and so there is a coupling so that $\Pr[\widehat{\Bb}\neq \Bb]\leq \epsilon^*/2$. Conditioned on $\widehat{\Bb}=\Bb$, we have that the total variation distance between $\widehat{\Gammab}$ and $\Gammab$ is at most $\epsilon^*/6$, so there is a coupling which further satisfies $\Pr[\widehat{\Gammab}\neq \Gammab\mid \widehat{\Bb}=\Bb]\leq \epsilon^*/6$. Finally, conditioned on $\widehat{\Bb}=\Bb$ and $\widehat{\Gammab}=\Gammab$, we can clearly couple $\hat{\sigmab}$ and $\sigmab$ so that they agree. It follows that the total variation distance between $\hat{\sigmab}$ and $\sigmab$ it at most $2\epsilon^*/3$,  so the result will follow by showing that the distribution of $\sigmab$ and $\mu_{G,H}$ are at distance at most $ 3 e^{-n/(4q)}\leq \epsilon^*/3$.

For this, we will consider the set of configurations $\widehat{\Sigma}:= \Sigma_{G, H,\epsilon}\backslash \Sigma^{\overlap}_{G, H,3\epsilon}$, where recall from Definition~\ref{def:rg5g5gge} that $\Sigma_{G, H,\epsilon}$ is the set of configurations $\tau$ with $\sum_{i\in\{0,1\}}\big|\tau^{-1}(B_i) \cap V_G^i\big| \geq (1 - \epsilon)|V_G|$, and from Definition~\ref{def:zhateps} that $\Sigma^{\overlap}_{G, H,3\epsilon}$  is the set of configurations $\tau$ such that $\tau \in \Sigma^{B_0, B_1}_{G, H,3\epsilon}\cap \Sigma^{C_0, C_1}_{G, H,3\epsilon}$ for some distinct maximal bicliques $(B_0,B_1),(C_0,C_1)$.

Consider arbitrary $\tau\in \widehat{\Sigma}$ and let $(B_0,B_1)$ be such that $\tau\in \Sigma^{B_0,B_1}_{G, H,\epsilon}$. For $i\in \{0,1\}$, let $T_i=V_G^i\cap  \tau^{-1}([q]\backslash B_i)$, and $S_1,\hdots, S_k$ denote the $G^3$-connected components of $T:=T_0\cup T_1$; note, since $\tau\in \widehat{\Sigma}$, we have $|T|\leq \epsilon |V_G|$. Consider the polymer configuration $\Gamma_\tau$ which is the union of the polymers $(S_1,\tau|_{S_1}),\hdots, (S_k,\tau|_{S_k})$. We next find how the sample $\sigmab$ can equal $\tau$; we claim that 
\begin{equation}\label{eq:vtvthbyr5312}
\sigmab =\tau \mbox{ iff } \Bb=(B_0,B_1),\ \Gammab=\Gamma_{\tau},\  \sigmab|_{V_G\backslash (\cup \Gamma)}=\tau|_{V_G\backslash T}.
\end{equation}
The reverse direction of this equivalence is immediate, noting that the equality $\Gammab=\Gamma_{\tau}$ implies that $(\cup \Gamma)=T$ and that $\sigmab|_{\cup \Gammab}=\tau|_{ T}$. So, we focus on the forward direction and assume that $\sigmab=\tau$. Then:
\begin{enumerate}
\item \label{it:a123} $\Bb=(B_0,B_1)$. Otherwise, if $\Bb= (C_0,C_1)$ for some $(C_0,C_1)\neq (B_0,B_1)$, then by Lemma~\ref{lem:smallconfigs} we would have that $\tau\in  \Sigma^{C_0, C_1}_{G, H, 3\epsilon}$, contradicting that $\tau\notin \Sigma^{\overlap}_{G, H,3\epsilon}$.
\item \label{it:b123} $\cup \Gammab=T_0\cup T_1$. Since $\Bb=(B_0,B_1)$ from Item~\ref{it:a123}, for $i\in \{0,1\}$ we have that $u\in V_G^{i} \cap (\cup \Gammab)$ iff $\sigmab(u)\in [q]\backslash B_i$ iff $\tau(u)\in [q]\backslash B_i$ iff $u\in T_i$.  
\item \label{it:c123} $\sigmab|_{\cup \Gammab}=\tau|_{ T}$ and $\sigmab|_{V_G\backslash (\cup \Gamma)}=\tau|_{V_G\backslash T}$. This follows from Item~\ref{it:b123} and $\sigmab=\tau$.
\item $\Gammab=\Gamma_{\tau}$. This follows from Items~\ref{it:b123} and~\ref{it:c123}.
\end{enumerate}

 From \eqref{eq:vtvthbyr5312} and the definition of the sampling procedure for $\sigmab$, we therefore have that $\sigmab=\tau$ with probability
\begin{equation}\label{eq:tt434f3f5}
\frac{|B_0|^n |B_1|^n Z^{B_0, B_1}_{G, H, \epsilon}}{Z_{G, H, \epsilon}^{\text{polymer}}}\cdot \frac{w^{B_0, B_1}_{G, H}(\Gamma_\tau)}{Z^{B_0, B_1}_{G, H, \epsilon}}\cdot \prod_{i\in \{0,1\}}\frac{1}{|B_i|^{|V_G^i\backslash T^+|}} \prod_{u\in \partial T}\Big(\frac{1}{F_u}\prod_{v \in T \cap \partial u} H_{\tau(u), \tau(v)}\Big).
\end{equation}
From \eqref{eq:prodwt} applied to the polymer configuration $\Gamma_\tau$, we have, using that $\cup \Gamma_\tau=T$ and $\sigmab_{\Gamma_\tau}=\tau$, that
\[|B_0|^n |B_1|^n w^{B_0, B_1}_{G, H}(\Gamma_\tau)=\prod_{i\in \{0,1\}}|B_i|^{|V^i_G\backslash T^+|} \prod_{\{u,v\}\in E_G(T)}H_{\tau(u), \tau(v)}\prod_{u \in \partial T}F_u,\]
and hence we obtain that the expression in \eqref{eq:tt434f3f5} equals
\[\frac{\prod_{\{u,v\}\in E_G(T)}H_{\tau(u), \tau(v)}\prod_{u\in \partial T}\prod_{v \in T \cap \partial u} H_{\tau(u), \tau(v)}}{Z_{G, H, \epsilon}^{\text{polymer}}}=\frac{w_{G,H}(\tau)}{Z_{G, H, \epsilon}^{\text{polymer}}},\]
where the last equality follows by noting that edges that are not in $E_G(T)\cup E_G(T,\partial T)$ contribute a factor of 1 in the weight of $T$ (since their endpoints are assigned spins of the biclique). So, we have shown that $\sigmab=\tau$ with probability $w_{G,H}(\tau)/Z_{G, H, \epsilon}^{\text{polymer}}$.

Let $p_{\sigmab}$ be the probability that $\sigmab\in \widehat{\Sigma}$ and $p$ be the aggregate weight of configurations in the Gibbs distribution $\mu_{G,H}$ in $\widehat{\Sigma}$ (that is, $p$ is the probability of seeing a configuration in $\hat{\Sigma}$ when a sample is drawn from~$\mu_{G,H}$). Then, using that $Z^{\overlap}_{G,H,3\epsilon}\leq e^{-n/(3q)}Z_{G,H,\epsilon}$ from Lemma~\ref{lem:zhatepsappxheps} and $Z_{G, H, \epsilon}^{\text{polymer}}\leq (1+e^{-n/(4q)})Z_{G,H,\epsilon}$ from Lemma~\ref{lem:pmerpfappx}, we have that 
\begin{equation}\label{eq:ff555}
p_{\sigmab}\geq \frac{1}{Z_{G, H, \epsilon}^{\text{polymer}}} \sum_{\tau \in \widehat{\Sigma}}w_{G,H}(\tau)\geq \frac{Z_{G,H,\epsilon}-Z^{\overlap}_{G,H,3\epsilon}}{Z_{G, H, \epsilon}^{\text{polymer}}}\geq \frac{\big(1-e^{-n/(3q)}\big)Z_{G,H,\epsilon}}{\big(1+e^{-n/(4q)}\big)Z_{G,H,\epsilon}}\geq 1-2e^{-n/(4q)},
\end{equation}
while for $p$, using that $Z_{G, H}\leq (1+e^{-n})Z_{G,H,\epsilon}$ from Lemma~\ref{lem:gs}, we have the bound
\begin{equation}\label{eq:ff555b}
p\geq \frac{1}{Z_{G, H}} \sum_{\tau \in \widehat{\Sigma}}w_{G,H}(\tau)\geq \frac{Z_{G,H,\epsilon}-Z^{\overlap}_{G,H,3\epsilon}}{Z_{G, H}}\geq \frac{\big(1-e^{-n/(3q)}\big)Z_{G,H,\epsilon}}{(1+e^{-n})Z_{G,H,\epsilon}}\geq 1-2e^{-n/(3q)}.
\end{equation}
It follows that the total variation distance between the distribution of $\sigmab$ and $\mu_{G,H}$ is bounded above by 
\[D:=\frac{1}{2}\big((1-p_{\sigmab})+(1-p)+M\big), \quad \mbox{ where } M:=\sum_{\tau \in \widehat{\Sigma}}w_{G,H}(\tau)\Big| \frac{1}{Z_{G, H, \epsilon}^{\text{polymer}}}-\frac{1}{Z_{G,H}}\Big|.\]
Using Lemma~\ref{lem:gs} and Lemma~\ref{lem:pmerpfappx}, we have the bound 
\begin{equation}\label{eq:vrr545t6}
M\leq  \Big| \frac{Z_{G,H}}{Z_{G, H, \epsilon}^{\text{polymer}}}-1\Big|\leq 2e^{-n/(4q)}.
\end{equation}
Combining \eqref{eq:ff555}, \eqref{eq:ff555b} and \eqref{eq:vrr545t6}, we obtain that $D\leq 3 e^{-n/(4q)}$, i.e., the distance between the distribution of  $\sigmab$ and $\mu_{G,H}$ is at most $3 e^{-n/(4q)}$, as claimed. 

This finishes the proof of Theorem~\ref{thm:ssfpras}.
\end{proof}

\bibliographystyle{plain}
\bibliography{ourbib}
\end{document}